\newcommand{\cP}{{\cal P}}
\newcommand{\cR}{{\cal R}}
\newcommand{\cT}{{\cal T}}
\newcommand{\cA}{{\cal A}}
\newcommand{\wA}{{\widehat A}}
\newcommand{\wS}{{\widehat S}}
\newcommand{\wI}{{\widehat I}}
\newcommand{\cwT}{{\widehat {\cal T}}}
\begin{document}
\sloppy
\mainmatter

\title{A Competitive Analysis for Balanced Transactional Memory Workloads}

\titlerunning{A Competitive Analysis for Balanced Transactional Memory Workloads}

\author{Gokarna Sharma\and Costas Busch}

\authorrunning{G. Sharma and C. Busch}

\institute{Department of Computer Science\\ Louisiana State University,
Baton Rouge, LA 70803, USA\\ \email{\{gokarna,busch\}@csc.lsu.edu}}

\toctitle{A Competitive Analysis for Balanced Transactional Memory Workloads}
\tocauthor{G. Sharma and C. Busch}
\maketitle

\begin{abstract}
We consider transactional memory contention management
in the context of {\em balanced workloads},
where if a transaction is writing,
the number of write operations it performs is a constant fraction
of its total reads and writes.
We explore the theoretical performance boundaries of contention management
in balanced workloads from the worst-case perspective by
presenting and analyzing two new polynomial time contention management algorithms.
The first algorithm {\sf Clairvoyant}
is $O(\sqrt{s})$-competitive,
where $s$ is the number of shared resources.
This algorithm depends on explicitly knowing the conflict graph.
The second algorithm {\sf Non-Clairvoyant}
is $O(\sqrt{s} \cdot \log n)$-competitive,
with high probability,
which is only a $O(\log n)$ factor worse,
but does not require knowledge of the conflict graph,
where $n$ is the number of transactions.
Both of these algorithms are greedy.
We also prove that the performance of {\sf Clairvoyant} is tight,
since there is no polynomial time contention management algorithm that
is better than $O((\sqrt{s})^{1-\epsilon})$-competitive
for any constant $\epsilon > 0$,
unless {\sf NP$\subseteq$ZPP}.
To our knowledge, these results are significant improvements over the
best previously known $O(s)$ competitive ratio bound.
\end{abstract}

\section{Introduction}
\label{section:introduction}

The ability of multi-core architectures to increase application performance
depends on maximizing the utilization of the computing resources provided
by them and using multiple threads within applications.
These architectures present both an opportunity
and challenge for multi-threaded software.
The opportunity is that threads will be available to an unprecedented degree,
and the challenge is that more programmers will be exposed
to concurrency related synchronization problems
that until now were of concern only to a selected few.
Writing concurrent programs is a non-trivial task
because of the complexity of ensuring proper synchronization.
Conventional lock based synchronization (i.e., mutual exclusion) suffers from well known limitations,
so researchers considered non-blocking transactions as an alternative.
Herlihy and Moss \cite{Herlihy93transactionalmemory}
proposed Transactional Memory (TM),
as an alternative implementation of mutual exclusion,
which avoids many of the drawbacks of locks,
e.g., deadlock, reliance on the programmer to associate shared data with locks,
priority inversion, and failures of threads while holding locks.
Shavit and Touitou \cite{Shavit95} extended this idea to Software-only Transactional Memory (STM)
by proposing a novel software
method for supporting flexible transactional programming
of synchronization operations \cite{Her03,Har03,Har05}.

A transaction consists of a sequence of read and write operations
to a set of shared system resources (e.g. shared memory locations).
Transactions may conflict when they access the same shared resources.
If a transaction $T$ discovers that it conflicts with another transaction $T'$
(because they share a common resource),
it has two choices, it can give $T'$ a chance to commit by aborting itself,
or it can proceed and commit by forcing $T'$ to abort;
the aborted transaction then retries again until it eventually commits.
To solve the transaction scheduling problem efficiently,
each transaction consults with the {\em contention manager} module for which choice to make.
Dynamic STM (DSTM) \cite{Her03}, proposed for dynamic-sized data structures,
is the first STM implementation that uses a contention manager
as an independent module to resolve conflicts between two transactions and ensure progress.
Of particular interest are {\em greedy contention managers}
where a transaction restarts immediately after every abort.
As TM has been gaining attention,
several (greedy) contention managers
have been proposed in the literature \cite{Attiya10,Gue05a,Gue05b,carstm,Schneider09,Ramadan08}.
which have been assessed formally and experimentally
by specific benchmarks \cite{Sch05}.

A major challenge in guaranteeing progress through transactional contention
managers is to devise a policy which ensures that all transactions commit
in the shortest possible time.
The goal is to minimize the {\em makespan} which
is defined as the duration from the start of the schedule, i.e.,
the time when the first transaction is issued,
until all transactions commit.
The makespan of the transactional scheduling algorithm
can be compared to the makespan of an optimal off-line scheduling algorithm
to provide a {\em competitive ratio}.
The makespan and competitive ratio
primarily depend on the {\em workload} $-$ the set of transactions,
along with their arrival times, duration, and
resources they read and modify \cite{Bimodal09}.

The performance of some of the contention managers
has been analyzed formally in \cite{Bimodal09,Attiya10,Gue05a,Gue05b,Schneider09,Window10}
(the detailed description is given in Section \ref{section:related}).
The best known formal bound is provided in \cite{Attiya10}
where the authors give an $O(s)$ competitive ratio bound,
where $s$ is the number of shared resources.
When the number of resources $s$ increases, the performance degrades linearly.
A difficulty in obtaining better competitive ratios
is that the scheduling problem of $n$ concurrent transactions
is directly related to the vertex coloring problem
which is a hard problem to approximate \cite{Khot01}.
A natural question which we address here
is whether it is possible to obtain better competitive ratios.
As we show below,
it is indeed possible to obtain sub-linear competitive ratios
for balanced transaction workloads.

\subsection{Contributions}
In this paper, we study contention management
in the context of {\em balanced workloads}
which have better performance potential for transactional memory.
A balanced workload consists of a set of transactions
in which each transaction has the following property:
if the transaction performs write operations,
then the number of writes it performs is a constant fraction
of the total number of operations (read and writes) of the transaction.
The {\em balancing ratio} $\beta$ expresses the
ratio of write operations of a transaction
to the overall operations of the transaction.
The balancing ratio is bounded as $\frac{1} {s} \leq \beta \leq 1$,
since a writing transaction writes to at least one resource.
In balanced workloads $\beta = \Theta(1)$ for all the transactions which perform writes.
Balanced workloads can also include read-only transactions,
but we assume that there is at least one transaction that performs
writes, since otherwise the scheduling problem is trivial (no conflicts).

Balanced transaction workloads represent
interesting and practical transaction memory scheduling problems.
For example balanced workloads represent the case where
we have small sized transactions
each accessing a small (constant) number of resources,
where trivially $\beta = \Theta(1)$.
Other interesting scenarios are transaction workloads which are write intensive,
where transactions perform many writes, as for example in scientific computing
applications where transactions have to update large arrays.

We present two new polynomial time contention management algorithms
which are especially tailored for balanced workloads and analyze
their theoretical performance boundaries from the worst-case perspective.
The first algorithm, called {\sf Clairvoyant},
is $O \left (\ell \cdot \sqrt{\frac {s} {\beta} } \right)$-competitive
where $s$ is the number of shared resources,
and $\ell$ expresses the logarithm ratio of the longest to shortest execution times
of the transactions.
(The transaction execution time is the time it needs to commit uninterrupted
from the moment it starts.)
For balanced transaction workloads where $\beta = \Theta(1)$,
and when transaction execution times are
close to each other, i.e. $\ell = O(1)$,
Algorithm {\sf Clairvoyant} is $O(\sqrt{s})$-competitive.
This algorithm is greedy and has the pending commit property
(where at least one transaction executes uninterrupted each time).
However, it depends on assigning priorities to the transactions
based on the explicit knowledge
of the transaction conflict graph which evolves
while the execution of the transactions progresses.
It also assumes that each transaction knows how long is its execution time
and how many resources it accesses.

The second algorithm, called {\sf Non-Clairvoyant},
is $O \left (\ell \cdot \sqrt{\frac {s} {\beta}} \cdot \log n \right)$-competitive,
with high probability (at least $1- \frac{1} {n}$),
where $n$ is the number of transactions concurrently
executing in $n$ threads.
For balanced transaction workloads, where $\beta = \Theta(1)$,
and when transaction execution times are close to each other,
i.e. $\ell = O(1)$,
Algorithm {\sf Non-Clairvoyant} is $O(\sqrt{s} \cdot \log n)$-competitive.
This is only a $O(\log n)$ factor worse than {\sf Clairvoyant},
but does not require explicit knowledge of the conflict graph.
The algorithm is also greedy.
This algorithm uses as a subroutine a variation of the {\sf RandomizedRounds}
scheduling algorithm by Schneider and Wattenhofer \cite{Schneider09}
which uses randomized priorities and doesn't require knowledge of the conflict graph.

The $O(\sqrt{s})$ bound of Algorithm {\sf Clairvoyant} is actually tight.
Through a reduction from the graph coloring problem,
we show that it is impossible to approximate in polynomial time
any transactional scheduling
problem with $\beta = 1$ and $\ell = 1$
with a competitive ratio smaller than $O((\sqrt{s})^{1-\epsilon})$
for any constant $\epsilon > 0$,
unless {\sf NP$\subseteq$ZPP}.
To our knowledge, these results are significant improvements over the
best previously known bound of $O(s)$
for transactional memory contention managers.
For general workloads (including non-balanced workloads),
where transactions are equi-length ($\ell = O(1)$),
our analysis gives $O(s)$ competitive worst case bound,
since $\beta \geq 1/s$.
This bound matches the best previously known bound of $O(s)$
for general workloads.
The parametrization of $\beta$ that we provide
gives more tradeoffs and flexibility
for better scheduling performance,
as depicted by the performance of our algorithms
in balanced workloads.

\subsection{Related Work}\label{section:related}
Almost 10 year after publishing the seminal paper \cite{Herlihy93transactionalmemory}
to introduce the new research area of transactional memory,
Herlihy {\it et al.}~\cite{Her03} proposed Dynamic STM (DSTM)
for dynamic-sized data structures.
Later on, several other STM implementations have been proposed,
such as TL2 \cite{TL2}, TinySTM \cite{TinySTM}, and RSTM \cite{RSTM06} to name a few.
Among them, DSTM is the first practical obstruction-free\footnote{A synchronization mechanism
is obstruction-free if any thread that runs for a long time it eventually makes progress \cite{Obstruction-Free03}.}
implementation that seeks advice from the contention
manager module to either wait or abort a transaction at the time of conflict.

Several contention managers have been proposed in STM
and the performance of some of them has been
analyzed formally in \cite{Bimodal09,Attiya10,Gue05a,Gue05b,Schneider09,Window10}.
The first formal analysis of the performance of a contention manager
is given by Guerraoui {\it et al.}~\cite{Gue05a}
where they present the {\sf Greedy} contention manager
which decides in favor of older transactions using timestamps
and achieves $O(s^{2})$ competitive ratio.
This bound holds for any algorithm
which ensures the {\em pending commit} property (see Definition \ref{definition:pendingcommit}). 
Attiya {\it et al.}~\cite{Attiya10} improve the competitive ratio to $O(s)$,
and prove a matching lower bound of $\Omega(s)$
for any deterministic {\em work-conserving} algorithm
which schedules as many transactions as possible
(by choosing a maximal independent set of transactions).
The model in~\cite{Attiya10} is non-clairvoyant in the sense that
it requires no prior knowledge about the transactions
while they are executed.

Schneider and Wattenhofer \cite{Schneider09}
present a deterministic algorithm {\sf CommitBounds}
with competitive ratio $\Theta(s)$
and a randomized algorithm {\sf RandomizedRounds} with makespan
$O(C \log n)$ with high probability, for a set of $n$ transactions, 
where $C$ denotes the maximum number of  conflicts among transactions (assuming unit execution time durations for transactions).
Sharma {\it et al.}~\cite {Window10} study greedy contention managers
for $M\times N$ {\em execution windows of transactions}
with $M$ threads and $N$ transactions per thread and
present and analyze two new randomized greedy contention management algorithms.
Their first algorithm {\sf Offline-Greedy}
produces a schedule of length $O(\tau_{\max} \cdot (C+N\log(MN)))$
with high probability,
where $\tau_{\max}$ is the execution time duration of the longest transaction in the system,
and the second algorithm {\sf Online-Greedy}
produces a schedule of length $O(\tau_{\max} \cdot (C \log (MN) + N \log^2(MN)))$.
The competitiveness of both of the algorithms is within a poly-log factor
of $O(s)$.
Another recent work is {\sf Serializer} \cite{carstm}
which resolves a conflict by removing a conflicting transaction $T$
from the processor core where it was running,
and scheduling it on the processor core of the other transaction to which it conflicted with.
It is $O(n)$-competitive and in fact, it ensures that two transactions never conflict more than once.


{\em TM schedulers} \cite{Bimodal09,Preventing09,Yoo08,stealonabort09} offer an alternative approach to boost the TM performance.
A TM scheduler is a software component which decides when a particular transaction executes.
One proposal in this approach is {\sf Adaptive Transaction Scheduling (ATS)} \cite{Yoo08}
which measures adaptively the contention intensity of a thread,
and when the contention intensity increases beyond a threshold
it serializes the transactions.
The {\sf Restart} and {\sf Shrink} schedulers,
proposed by Dragojevi\'{c} {\em et al.}~\cite{Preventing09},
depend on the prediction of future conflicts
and dynamically serialize transactions based on the prediction to avoid conflicts.
The {\sf ATS}, {\sf Restart}, and {\sf Shrink} schedulers are $O(n)$-competitive.
{\sf Steal-On-Abort} \cite{stealonabort09} is yet another proposal
where the aborted transaction is given to the opponent transaction and queued behind it,
preventing the two transactions from conflicting again.

Recently, Attiya {\em et al.}~\cite{Bimodal09} proposed the {\sf BIMODAL} scheduler
which alternates between {\em writing epochs}
where it gives priority to writing transactions and {\em reading epochs}
where it gives priority to transactions that have issued only reads so far.
It achieves $O(s)$ competitive ratio on bimodal workloads
with equi-length transactions.
A bimodal workload contains only early-write and read-only transactions.

\paragraph{Outline of Paper.}
The rest of the paper is organized as follows.
We present our TM model and definitions in Section \ref{section:preliminaries}.
We present and formally analyze two new randomized algorithms, {\sf Clairvoyant} and {\sf Non-Clairvoyant},
in Sections \ref{section:offline} and \ref{section:online}, respectively.
The hardness result of balanced workload scheduling is presented in Section \ref{section:lower bound}.
Section \ref{section:conclusion} concludes the paper.

\section{Model and Definitions}
\label{section:preliminaries}

Consider a system of $n\geq 1$ threads $\cP = \{P_1, \cdots, P_n\}$
with a finite set  of $s$ shared resources $\cR =\{R_1,\ldots,R_s\}$.
We consider batch execution problems,
where the system issues a set of $n$ transactions $\cT=\{T_1, \cdots, T_n\}$ ({\em transaction workload}),
one transaction $T_i$ per thread $P_i$.
Each transaction is a sequence of actions (operations) each of which is either
a read or write to some shared resource.
The sequence of operations in a transaction must be {\em atomic}:
all operations of a transaction are guaranteed to either completely occur, or have no effects at all.
A transaction that only reads shared resources is called {\em read-only};
otherwise it is called a {\em writing} transaction.
We consider transaction workloads where at least one transaction is writing.

After a transaction is issued and starts execution it either {\em commits} or {\em aborts}.
A transaction that has been issued but not committed yet is said to be {\em pending}.
A pending transaction can {\em restart} multiple times until it eventually commits.
Concurrent write-write actions or read-write actions to shared objects
by two or more transactions cause conflicts between transactions.
If a transaction conflicts then it either aborts,
or it may commit and force to abort all other conflicting transactions.
In a {\em greedy schedule}, if a transaction aborts due to conflicts
it then immediately restarts and attempts to commit again.
We assume that the execution time advances synchronously
for all threads and a preemption and abort require negligible time. 
We also assume that
all transactions in the system are correct,
i.e., there are no faulty transactions.\footnote{A transaction is called faulty when it encounters
an illegal instruction producing a segmentation fault or experiences a page fault resulting to wait
for a long time for the page to be available \cite{Gue05b}.}

\begin{definition}[Pending Commit Property \cite{Gue05a}]
\label{definition:pendingcommit}
A contention manager obeys the {\em pending commit} property
if, whenever there are pending transactions,
some running transaction $T$ will execute uninterrupted until it commits.
\end{definition}

Let $\cR(T_i)$ denote the set of resources used by a transaction $T_i$.
We can write $\cR(T_i) = \cR_w(T_i) \cup \cR_r(T_i)$,
where $\cR_w(T_i)$ are the resources which are to be written by $T_i$,
and $\cR_r(T_i)$ are the resources to be read by $T_i$.

\begin{definition}[Transaction Conflict]
Two transactions $T_i$ and $T_j$ {\em conflict}
if at least one of them writes on a common resource,
that is, there is a resource $R$ such that
$R \in (\cR_w(T_i) \cap \cR(T_j)) \cup (\cR(T_i) \cap \cR_w(T_j))$
(we also say that $R$ causes the conflict).
\end{definition}

From the definition of transaction conflicts we can define the {\em conflict graph}
for a set of transactions.
In the conflict graph, each node corresponds to a transaction and each edge represents a conflict
between the adjacent transactions.

\begin{definition}[Conflict Graph]
\label{definition:3}
For a set of transactions $\cT$,
the {\em conflict graph} $G(\cT)=(V,E)$ has as nodes the transactions, $V = \cT$,
and $(T_i, T_j) \in E$
for any two transactions $T_i,T_j$ that conflict.
\end{definition}

Let $\gamma(R_j)$ denote the number of transactions that write resource $R_j$.
Let $\gamma_{\max} = \max_{j} \gamma(R_j)$.
Denote $\lambda_w(T_i) = |\cR_w(T_i)|$,
$\lambda_r(T_i) = |\cR_r(T_i)|$,
and $\lambda(T_i) = |\cR(T_i)|$,
the number of resources which are being accessed by transaction $T_i$ for write,
read, and both read and write.
Let $\lambda_{\max} = \max_i \lambda(T_i)$.
Note that in the conflict graph $G$
the maximum node degree is bounded by $\lambda_{\max} \cdot \gamma_{\max}$,
and also there is a node whose degree is at least $\gamma_{\max}$.

For any transaction $T_i$
we define the {\em balancing ratio}
$\beta(T_i) = \frac{|\cR_w(T_i)|}{|\cR(T_i)|}$
as the ratio of number of writes versus the total number of resources it accesses.
For a read-only transaction $\beta(T_i) = 0$.
For a writing transaction it holds $\frac{1}{s}\leq \beta(T_i) \leq  1$,
since there will be at least one write performed by $T_i$
to one of the $s$ resources.
We define the {\em global balancing ratio}
as the minimum of the individual writing transaction balancing ratios:
$\beta = \min_{(T_i \in \cT) \wedge (\lambda_w(T_i) > 0)} \beta(T_i)$.
We define {\em balanced transaction workloads} as follows
(recall that we consider workloads with at least one writing transaction):
\begin{definition}[Balanced Workloads]
We say that a workload (set of transactions) $\cT$ is {\em balanced}
if $\beta = \Theta(1)$.
\end{definition}
In other words, in balanced transaction workloads
the number of writes that each writing transaction performs
is a constant fraction of the total number of resource accesses (for read or write)
that the transaction performs.

Each transaction $T_{i}$ has execution time duration $\tau_{i} > 0$.
The execution time is the total number of discrete time steps that the
transaction requires to commit uninterrupted from the moment it starts.
In our model we assume that the execution time of each transaction is fixed.
Let $\tau_{max}= \max_i  \tau_i$ be the execution time of the longest transaction,
and $\tau_{min}=\min_i  \tau_i$ be the execution time of the shortest transaction.
We denote $\ell=\left\lceil\log\left(\frac{\tau_{max}}{\tau_{min}}\right)\right\rceil + 1$.
We finish this section
with the basic definitions of {\em makespan} and {\em competitive ratio}.

\begin{definition}[Makespan and Competitive Ratio]
Given a contention manager $\cA$ and a workload $\cT$,
$makespan_{\cA}(\cT)$ is the total time $\cA$ needs to commit all the transactions in $\cT$.
The {\em competitive ratio} is $CR_{\cA}(\cT)=\frac{makespan_{\cA}(\cT)}{makespan_{opt}(\cT)}$,
where {\sf opt} is the optimal off-line scheduler.
\end{definition}

\section{Clairvoyant Algorithm}\label{section:offline}

We describe and analyze Algorithm {\sf Clairvoyant}
(see Algorithm \ref{algorithm:clairvoyant}).
The writing transactions are divided into $\ell$ groups $A_0, A_1, \ldots, A_{\ell-1}$,
where $\ell = \left \lceil \log \left (\frac{\tau_{\max}}{\tau_{\min}} \right ) \right \rceil + 1$,
in such a way that $A_i$ contains transactions
with execution time duration in range
$[2^i \cdot \tau_{\min}, (2^{i+1} -1) \cdot \tau_{\min}]$,
for $0 \leq i \leq \ell - 1$.
Each group of transactions $A_i$
is then again divided into $\kappa$ subgroups
$A_{i}^0, A_{i}^1, \ldots, A_{i}^{\kappa-1}$,
where $\kappa = \lceil \log s \rceil + 1$,
such that each transaction $T \in A_{i}^j$
accesses (for read and write)
a number of resources in range $\lambda(T) \in [2^j, 2^{j+1}-1]$, for $0\leq j\leq \kappa-1$.
We assign an order to the subgroups
in such a way that  $A_i^j<A_k^l$ if $i<k$ or $i=k \wedge j<l$.
Note that some of the subgroups may be empty.
The read-only transactions are placed into a special group $B$
which has the highest order.

At any time $t$ the pending transactions are assigned a priority level
which determines which transactions commit or abort.
A transaction is assigned a priority which
is one of: {\em high} or {\em low}.
Let $\Pi_t^h$ and $\Pi_t^l$ denote the set of transactions
which will be assigned high and low priority, respectively, at time $t$.
In conflicts, high priority transactions abort low priority transactions.
Conflicts between transactions of the same priority level are resolved arbitrarily.
Suppose that $\wA_t$
is the lowest order subgroup that contains pending transactions at time $t$.
Only transactions from $\wA_t$ can be given high priority, that is $\Pi_t^h \subseteq \wA_t$.

The priorities are determined according to the conflict graph for the transactions.
Let $\cT_t$ denote the set of all transactions which are pending at time $t$.
(Initially, $\cT_0 = \cT$.)
Let $\cwT_t$ denote the pending transactions of $\wA_t$ at time $t$.
(Initially, $\cwT_0 = \wA_0$.)
Let $\wS_t$ denote the set of transactions in $\cwT_t$
which are pending and have started executing before $t$
but have not yet committed or aborted.
Let $\wS'_t$ denote the set of transactions in $\cT_t$
which conflict with $\wS_t$.
Let $\wI_t$ be a maximal independent set in the conflict graph $G(\cwT_t \setminus \wS'_t)$.
Then, the set of high priority transactions at time $t$ is
set to be $\Pi_t^h = \wI_t \cup \wS_t$.
The remaining transactions are given low priority,
that is, $\Pi_t^l = \cT_t \setminus \Pi_t^h$.
Note that the transactions in $\Pi_t^h$ do not conflict with each other.
The transactions $\Pi_t^h$
will remain in high priority in subsequent time steps $t' > t$ until they commit,
since the transactions in $\wS_{t'}$ are included in $\Pi_{t'}^h$.

This algorithm is clairvoyant in the sense that it requires explicit knowledge
of the various conflict relations at each time $t$.
The algorithm is greedy,
since at each time step each pending transaction is not idle.
The algorithm also satisfies the pending commit property
since at any time step $t$ at least one transaction from $\wA_t$
will execute uninterrupted until it commits.
We have assumed above that each transaction knows its execution length and
the number of resources it accesses.
Clearly, the algorithm computes the schedule in polynomial time.

\begin{algorithm}[t]
{\small
\KwIn{A set $\cT$ of $n$ transactions with global balancing ratio
$\beta$\;}
\KwOut{A greedy execution schedule\;}
\BlankLine
\nlset{-}  Divide writing transactions into $\ell=\lceil\log(\frac{\tau_{\max}}{\tau_{\min}})\rceil+1$
groups $A_0, A_1, \cdots, A_{\ell-1}$
in such a way that $A_i$ contains transactions with execution time duration
in range $[2^i \cdot \tau_{\min}, (2^{i+1}-1) \cdot \tau_{\min}]$;
Read-only transactions are placed in special group $B$\;

\nlset{-}  Divide $A_i$ again into $\kappa=\lceil\log s\rceil + 1$
subgroups $A_{i}^0,A_{i}^1, \cdots, A_{i}^{\kappa-1}$
in a way that each subgroup $A_{i}^j$
contains transactions that access a number of resource in the range
$[2^j,2^{j+1}-1]$\;

\nlset{-}  Order the groups and subgroups such that $A_i^j<A_k^l$ if $i<k$ or $i=k \wedge j<l$;
           special group $B$ has highest order\;

\BlankLine

\ForEach{time step $t = 0, ~1, ~2, ~3, \ldots$}{

\textbf{Set Definitions:}\\
{\Indp

$\cT_t$: set of transactions that are pending; ~~\tcp{$\cT_0 \gets \cT$}

$\wA_t$: lowest order group that contains pending transactions\;

$\cwT_t$: set of transactions in $\wA_t$ which are pending; ~~\tcp{$\cwT_0 \gets \wA_0$}

$\wS_t$: set of transactions in $\cwT_t$ which were started before $t$\;

$\wS'_t$: set of conflicting transactions in $\cT_t$ which conflict with $\wS_t$\;

$\wI_t:$ maximal independent set in the conflict graph $G(\cwT_t \setminus \wS'_t)$\;
}

\textbf{Priority Assignment:}\\
{\Indp
High priority transactions: $\Pi_t^h \gets \wI_t \cup \wS_t$\;
Low priority transactions: $\Pi_t^l \gets \cT_t \setminus \Pi_t^h$\;
}

\textbf{Conflict Resolution:}\\
{\Indp

Execute all pending transactions\;

\textbf{On conflict} of transaction $T_u$ with transaction $T_v$:\\
\Indp

\lIf {$(T_u \in \Pi_t^h) ~\wedge~ (T_v \in \Pi_t^l)$}
{$abort(T_u, T_v)$;} \lElse {$abort(T_v, T_u)$\;}
}
}
}
\caption{{\sf Clairvoyant}}
\label{algorithm:clairvoyant}
\end{algorithm}

\subsection{Analysis of Clairvoyant Algorithm}

We now give a competitive analysis of Algorithm {\sf Clairvoyant}.
Define $\tau_{\min}^j = 2^i \cdot \tau_{\min}$
and $\tau_{\max}^j = (2^{i+1} - 1) \cdot \tau_{\min}$.
Note that the duration of each transaction $T \in A_i^j$
is in range $[\tau_{\min}^j, \tau_{\max}^j]$,
and also $\tau_{\max}^j \leq 2 \tau_{\min}^j$.
Define $\lambda_{\min}^j = 2^j$ and $\lambda_{\max}^j = 2^{j+1}-1$.
Note that for each transaction $T \in A_i^j$,
$\lambda(T) \in [\lambda_{\min}^j, \lambda_{\max}^j]$,
and $\lambda_{\max}^j \leq 2 \lambda_{\min}^j$.
Let $\gamma_i^j(R_v)$ denote the number of transactions in a subgroup $A_i^j$
that write $R_v,  1\leq v\leq s$.
Let $\gamma_{\max}^j = max_{i \in [1,\ell], v \in [1,s]} \gamma_i^j(R_v)$.

In the next results we will first focus on a subgroup $A_i^j$
and we will assume that there are no other transactions in the system.
We give bounds for the competitive ratio for $A_i^j$
which will be useful when we later analyze the performance
for all the transactions in $\cT$.
\begin{lemma}
\label{observation:1}
If we only consider transactions in subgroup $A_i^j$,
then the competitive ratio is bounded by
$CR_{Clairvoyant}(A_i^j)\leq 2 \cdot \lambda_{\max}^j + 2.$
\end{lemma}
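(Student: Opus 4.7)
The plan is to bound the makespan of {\sf Clairvoyant} on $A_i^j$ against a lower bound on the optimal makespan. Since every transaction in $A_i^j$ is writing, some resource $R^{\star}$ is written by $\gamma_{\max}^j\ge 1$ transactions of $A_i^j$; these writers pairwise conflict through $R^{\star}$ and must be serialized in any schedule, each taking at least $\tau_{\min}^j$ time, so that $makespan_{opt}(A_i^j)\ge\gamma_{\max}^j\cdot\tau_{\min}^j$. This will serve as the denominator of the competitive ratio.

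For the upper bound, fix any transaction $T\in A_i^j$ and let $t^{\star}$ be the first time $T$ enters $\Pi_t^h$. Once $T$ enters $\Pi^h$ it stays (it lies in $\wS_{t'}$ for all later $t'$) and runs uninterrupted until it commits, so $T$ commits by $t^{\star}+\tau(T)\le t^{\star}+\tau_{\max}^j$. To bound $t^{\star}$ I will use an interval-covering argument: for every $t<t^{\star}$, $T$ is pending yet $T\notin\Pi_t^h=\wI_t\cup\wS_t$; hence either $T\in\wS'_t$, witnessed by some conflicting $T'\in\wS_t\subseteq\Pi_t^h$, or by the maximality of $\wI_t$ in $G(\cwT_t\setminus\wS'_t)$ some neighbor of $T$ in that subgraph lies in $\wI_t\subseteq\Pi_t^h$. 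Either way, at time $t$ at least one neighbor of $T$ in $G(A_i^j)$ sits in $\Pi_t^h$. Each such neighbor occupies $\Pi^h$ during a single contiguous interval of length at most $\tau_{\max}^j$ ending at its commit time, and the union of these intervals must cover $[0,t^{\star})$; therefore $t^{\star}\le\deg_{G(A_i^j)}(T)\cdot\tau_{\max}^j$.

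Combining with the preliminary degree bound $\deg_{G(A_i^j)}(T)\le\lambda(T)\cdot\gamma_{\max}^j\le\lambda_{\max}^j\cdot\gamma_{\max}^j$ and with $\tau_{\max}^j\le 2\tau_{\min}^j$ gives $makespan_{Clairvoyant}(A_i^j)\le (\lambda_{\max}^j\gamma_{\max}^j+1)\cdot 2\tau_{\min}^j$, so the competitive ratio is at most $\frac{2(\lambda_{\max}^j\gamma_{\max}^j+1)}{\gamma_{\max}^j}=2\lambda_{\max}^j+\frac{2}{\gamma_{\max}^j}\le 2\lambda_{\max}^j+2$ (using $\gamma_{\max}^j\ge 1$). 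The main obstacle will be formalizing the interval-covering step so that it rests cleanly on the two structural facts used above: the stickiness of high priority (so each neighbor of $T$ enters $\Pi^h$ exactly once, producing a single interval of length at most $\tau_{\max}^j$) and the maximality of $\wI_t$ in $G(\cwT_t\setminus\wS'_t)$ (which covers both the $T\in\wS'_t$ and $T\notin\wS'_t$ cases to show some neighbor of $T$ is in $\Pi_t^h$ at every $t<t^{\star}$).
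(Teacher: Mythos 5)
Your proposal is correct and follows essentially the same route as the paper's proof: the same optimal-makespan lower bound $\gamma_{\max}^j\cdot\tau_{\min}^j$ via the most-written resource, the same algorithm-makespan upper bound $(\lambda_{\max}^j\cdot\gamma_{\max}^j+1)\cdot\tau_{\max}^j$ via the degree bound in $G(A_i^j)$, and the same final arithmetic using $\tau_{\max}^j\le 2\tau_{\min}^j$. Your interval-covering step (using maximality of $\wI_t$ and the stickiness of high priority) is just a more explicit formalization of the paper's informal claim that a low-priority $T$ always waits behind some high-priority conflicting transaction, each of which occupies high priority for at most $\tau_{\max}^j$ steps.
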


\begin{proof}
Since there is only one subgroup,
$\wA_t = A_i^j$.
A transaction $T \in A_i^j$
conflicts with at most $\lambda_{\max}^j \cdot \gamma_{\max}^j$
other transactions in the same subgroup.
If transaction $T$ is in low priority it is only
because some other conflicting transaction in $A_i^j$ is in high priority.
If no conflicting transaction is in high priority then $T$
becomes high priority immediately.
Since a high priority transaction executes uninterrupted until it commits,
it will take at most $\lambda_{\max}^j \cdot \gamma_{\max}^j$ time steps
until all conflicting transactions with $T$ have committed.
Thus, it is guaranteed that in at most $\lambda_{\max}^j \cdot \gamma_{\max}^j \cdot \tau_{\max}^j$
time steps $T$ becomes high priority.
Therefore,
$T$ commits by time $(\lambda_{\max}^j \cdot \gamma_{\max}^j + 1) \cdot \tau_{\max}^j$.
Since $T$ is an arbitrary transaction in $A_i^j$,
the makespan of the algorithm is bounded by:
$$makespan_{Clairvoyant}(A_i^j)\leq (\lambda_{\max}^j \cdot \gamma_{\max}^j+ 1) \cdot \tau_{\max}^j.$$
There is a resource that is accessed by at least $\gamma_{\max}^j$
transactions of $A_i^j$ for write.
All these transactions have to serialize
because they all conflict with each other in the common resource.
Therefore, the optimal makespan is bounded by:
$$makespan_{opt}(A_i^j) \geq \gamma_{\max}^j \cdot \tau_{\min}^j.$$

When we combine the upper and lower bounds we obtain a bound on the competitive ratio
of the algorithm:
\begin{eqnarray*}
CR_{Clairv.}(A_i^j)
 =  \frac{makespan_{Clairv.}(A_i^j)}{makespan_{opt}(A_i^j)}
 \leq  \frac{( \lambda_{\max}^j \cdot \gamma_{\max}^j  + 1 )\cdot \tau_{\max}^j}
{\gamma_{\max}^j \cdot \tau_{\min}^j}
 \leq  2 \cdot \lambda_{\max}^j + 2.
\end{eqnarray*}
\end{proof}

\begin{lemma}
\label{observation:2}
If we only consider transactions in subgroup $A_i^j$,
then the competitive ratio is bounded by
$CR_{Clairvoyant}(A_i^j) \leq 4 \cdot \frac{s / \beta }{\lambda_{\max}^j}.$
\end{lemma}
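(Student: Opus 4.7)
The plan is to combine a different makespan upper bound with a stronger lower bound on $\gamma_{\max}^j$ that exploits the balancing ratio $\beta$. The previous Lemma used the trivial per-transaction bound $\lambda_{\max}^j \cdot \gamma_{\max}^j \cdot \tau_{\max}^j$, which is good when $\lambda_{\max}^j$ is small. Here I want a bound that improves as $\lambda_{\max}^j$ grows.

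First I would upper bound the makespan by $|A_i^j| \cdot \tau_{\max}^j$. This follows from the pending commit property together with the priority rules of Algorithm \textsf{Clairvoyant}: within any window of $\tau_{\max}^j$ steps, at least one pending transaction of $A_i^j$ runs uninterrupted to commit, so after $|A_i^j|$ such windows all of $A_i^j$ has committed. Since $\tau_{\max}^j \leq 2\tau_{\min}^j$, we get $makespan_{Clairvoyant}(A_i^j) \leq 2 |A_i^j| \cdot \tau_{\min}^j$.

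Next I would lower bound $\gamma_{\max}^j$ via a pigeonhole argument. Every writing transaction $T \in A_i^j$ writes at least $\beta \cdot \lambda(T) \geq \beta \cdot \lambda_{\min}^j$ resources, by the definition of the global balancing ratio $\beta$ and the fact that $\lambda(T) \geq \lambda_{\min}^j$. Summing over $A_i^j$, the total number of (transaction, written-resource) pairs is at least $|A_i^j| \cdot \beta \cdot \lambda_{\min}^j$; since there are only $s$ resources, some resource $R$ is written by at least $\lceil |A_i^j| \cdot \beta \cdot \lambda_{\min}^j / s \rceil$ transactions of $A_i^j$. Hence
\[
\gamma_{\max}^j \;\geq\; \frac{|A_i^j| \cdot \beta \cdot \lambda_{\min}^j}{s}.
\]
As in the previous lemma, these transactions all conflict pairwise on $R$ and must serialize, so $makespan_{opt}(A_i^j) \geq \gamma_{\max}^j \cdot \tau_{\min}^j \geq \frac{|A_i^j| \cdot \beta \cdot \lambda_{\min}^j}{s} \cdot \tau_{\min}^j$.

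Finally I would combine the two bounds, using $\lambda_{\min}^j \geq \lambda_{\max}^j / 2$, to obtain
\[
CR_{Clairvoyant}(A_i^j) \;\leq\; \frac{2 |A_i^j| \cdot \tau_{\min}^j}{\tfrac{|A_i^j| \cdot \beta \cdot \lambda_{\min}^j}{s} \cdot \tau_{\min}^j} \;=\; \frac{2 s}{\beta \cdot \lambda_{\min}^j} \;\leq\; 4 \cdot \frac{s/\beta}{\lambda_{\max}^j},
\]
which is the claimed bound. The only delicate step is the makespan upper bound $|A_i^j| \cdot \tau_{\max}^j$: one must argue that, because $\wA_t = A_i^j$ throughout this analysis and because the high-priority set $\Pi_t^h$ always contains the in-progress transactions $\wS_t$, no high-priority transaction can be aborted and hence at least one commit from $A_i^j$ occurs every $\tau_{\max}^j$ steps. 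Everything else is a clean pigeonhole calculation driven by $\beta$.
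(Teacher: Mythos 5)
Your proof is correct and takes essentially the same route as the paper's: the pending-commit property yields the makespan bound $|A_i^j|\cdot\tau_{\max}^j$, and a pigeonhole count of write accesses driven by $\beta$ yields the lower bound $makespan_{opt}(A_i^j) \geq \frac{|A_i^j|\cdot\beta\cdot\lambda_{\min}^j}{s}\cdot\tau_{\min}^j$ on the optimal makespan. The only cosmetic difference is that you carry $\lambda_{\min}^j$ through the calculation and convert via $\lambda_{\min}^j \geq \lambda_{\max}^j/2$ at the end, whereas the paper substitutes this inequality immediately; the bounds and constants are identical.
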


\begin{proof}
Since the algorithm satisfies the pending-commit property,
if a transaction $T \in A_i^j$ does not commit, then some
conflicting transaction $T'\in A_i^j$ must commit.
Therefore, the makespan of the algorithm is bounded by:
$$makespan_{Clairvoyant}(A_i^j) \leq |A_i^j| \cdot \tau_{\max}^j.$$

Each transaction in $T \in A_i^j$ accesses at least $\lambda_w(T)$ resources for write.
Since we only consider transactions in $A_i^j$,
$\lambda_{w}(T) \geq \beta \cdot \lambda_{\min}^j \geq \beta \cdot \lambda_{\max}^j /2$.
Consequently,
by the pigeonhole principle,
there will be a resource $R \in \cR$ which is accessed by at least
$\sum_{T \in A_i^j} \lambda_{w}(T) / s
\geq |A_i^j| \cdot \beta \cdot \lambda_{\max}^j / (2 s)$ transactions for write.
All these transactions accessing $R$ have to serialize because they conflict with each other.
Therefore,
the optimal makespan is bounded by:
$$makespan_{opt}(A_i^j)
\geq \frac{|A_i^j| \cdot \beta \cdot \lambda_{\max}^j} {2s}  \cdot \tau_{\min}^j.$$

When we combine the above bounds of the makespan we obtain the following bound
on the competitive ratio of the algorithm:
\begin{eqnarray*}
CR_{Clairvoyant}(A_i^j)
& = & \frac{makespan_{Clairvoyant}(A_i^j)} {makespan_{opt}(A_i^j)}
\leq  \frac{|A_i^j|\cdot \tau_{\max}^j} {\frac{|A_i^j| \cdot \beta \cdot \lambda_{\max}^j}{ 2 s} \cdot \tau_{\min}^j}
 \leq  4 \cdot \frac{s / \beta }{\lambda_{\max}^j}.
\end{eqnarray*}
\end{proof}

From Lemmas \ref{observation:1} and \ref{observation:2},
we obtain:

\begin{corollary}
\label{lemma:competitive-ratio-subgroup}
If we only consider transactions in subgroup $A_i^j$,
then the competitive ratio of the algorithm is bounded by
$CR_{Clairvoyant} (A_{i}^j) \leq 4 \cdot \min \left \{\lambda_{\max}^j, \frac{s/\beta}{\lambda_{\max}^j} \right \}$.
\end{corollary}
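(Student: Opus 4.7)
The plan is essentially to observe that the corollary is the combination of the two preceding lemmas, so the proof amounts to showing that both bounds apply simultaneously and then normalizing the first bound into the form $4 \lambda_{\max}^j$ so that the common factor of $4$ can be pulled out of the minimum.

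First, I would invoke Lemma \ref{observation:1} directly to obtain the bound $CR_{Clairvoyant}(A_i^j) \leq 2 \lambda_{\max}^j + 2$. Since every subgroup $A_i^j$ contains transactions that access at least $\lambda_{\min}^j = 2^j \geq 1$ resources, we have $\lambda_{\max}^j = 2^{j+1} - 1 \geq 1$, so $2 \lambda_{\max}^j + 2 \leq 4 \lambda_{\max}^j$. Hence $CR_{Clairvoyant}(A_i^j) \leq 4 \lambda_{\max}^j$.

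Next, I would invoke Lemma \ref{observation:2} to get $CR_{Clairvoyant}(A_i^j) \leq 4 \cdot \frac{s/\beta}{\lambda_{\max}^j}$. Since both bounds hold for the same quantity $CR_{Clairvoyant}(A_i^j)$, it is also bounded by their minimum:
\[
CR_{Clairvoyant}(A_i^j) \leq \min\left\{ 4 \lambda_{\max}^j,\ 4 \cdot \frac{s/\beta}{\lambda_{\max}^j}\right\} = 4 \cdot \min\left\{\lambda_{\max}^j,\ \frac{s/\beta}{\lambda_{\max}^j}\right\}.
\]

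There is no real obstacle here; the only subtle point is the normalization step $2 \lambda_{\max}^j + 2 \leq 4 \lambda_{\max}^j$, which requires noting $\lambda_{\max}^j \geq 1$ from the definition of the subgroup partition. The corollary is useful because it trades off between the two bounds: when $\lambda_{\max}^j$ is small the first bound is tight, and when $\lambda_{\max}^j$ is large the second bound is tight, with the crossover around $\lambda_{\max}^j = \Theta(\sqrt{s/\beta})$, which will later yield the $O(\sqrt{s/\beta})$ competitive ratio for the whole workload.
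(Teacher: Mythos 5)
Your proof is correct and matches the paper's approach: the paper states this corollary as following immediately from Lemmas \ref{observation:1} and \ref{observation:2}, exactly the combination you carry out. The only detail you add beyond what the paper records is the explicit normalization $2\lambda_{\max}^j + 2 \leq 4\lambda_{\max}^j$ via $\lambda_{\max}^j = 2^{j+1}-1 \geq 1$, which is a valid (and implicitly assumed) step.
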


We now continue to provide a bound for the performance of individual groups.
This will help to provide bounds for all the transactions.

\begin{lemma}
\label{lemma:offline-competitive-group}
If we only consider transactions in group $A_i$,
then the competitive ratio of the algorithm is bounded by
$CR_{Clairvoyant} (A_{i})\leq 32 \cdot \sqrt{\frac{s}{\beta}}$.
\end{lemma}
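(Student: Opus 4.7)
The plan is to reduce the analysis of $A_i$ to the per-subgroup bounds already established in Corollary \ref{lemma:competitive-ratio-subgroup} and then sum over the $\kappa = \lceil\log s\rceil + 1$ subgroups. The key structural observation is that, when the only transactions in the system are those of $A_i$, the priority rule (only the lowest-order pending subgroup may receive high priority) forces the algorithm to finish $A_i^0$ before any transaction of $A_i^1$ can commit, and so on. Hence the makespan decomposes additively:
\[
makespan_{Clairvoyant}(A_i) \;=\; \sum_{j=0}^{\kappa-1} makespan_{Clairvoyant}(A_i^j).
\]

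For the denominator of the competitive ratio, the critical (and essentially trivial) fact is that the optimum schedule for $A_i$ must also schedule every subgroup $A_i^j \subseteq A_i$, so $makespan_{opt}(A_i) \geq makespan_{opt}(A_i^j)$ for each $j$. Combining these two observations with Corollary \ref{lemma:competitive-ratio-subgroup} yields
\[
CR_{Clairvoyant}(A_i) \;\leq\; \sum_{j=0}^{\kappa-1} CR_{Clairvoyant}(A_i^j) \;\leq\; 4\sum_{j=0}^{\kappa-1}\min\!\left\{\lambda_{\max}^j,\;\frac{s/\beta}{\lambda_{\max}^j}\right\}.
\]

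It remains to bound this sum, which is where the $\sqrt{s/\beta}$ factor comes from. Recall that $\lambda_{\max}^j = 2^{j+1}-1$, so the first argument of the minimum grows geometrically in $j$ while the second decays geometrically. I will split the sum at the index $J^*$ where $\lambda_{\max}^j \approx \sqrt{s/\beta}$: for $j < J^*$ use the bound $\lambda_{\max}^j < \sqrt{s/\beta}$, giving a geometric series whose total is $O(\sqrt{s/\beta})$; for $j \geq J^*$ use $\frac{s/\beta}{\lambda_{\max}^j}$, again a geometric series whose total is $O(\sqrt{s/\beta})$ because its largest term is at most $\sqrt{s/\beta}$. Summing the two halves and folding in the factor of $4$ gives the desired $32\sqrt{s/\beta}$ (or any comparable constant, after absorbing the $-1$'s in $\lambda_{\max}^j$).

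The only step with any subtlety is the decomposition of the makespan. One must check that the pending-commit property, together with the fact that $\Pi_t^h \subseteq \widehat{A}_t$ and $\widehat{A}_t$ is always the lowest-order pending subgroup, really does force the subgroups of $A_i$ to be processed in series when no other groups are present; everything else is routine geometric-series bookkeeping applied to the per-subgroup bound from Corollary \ref{lemma:competitive-ratio-subgroup}.
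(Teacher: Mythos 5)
Your proof is correct and follows essentially the same route as the paper: bound $CR_{Clairvoyant}(A_i)$ by $\sum_{j=0}^{\kappa-1} CR_{Clairvoyant}(A_i^j)$ using Corollary \ref{lemma:competitive-ratio-subgroup}, then split the sum at the crossover index where $\lambda_{\max}^j \approx \sqrt{s/\beta}$ and bound each half by a geometric series. The only difference is that you make explicit the justification for the summation step (serial processing of subgroups together with $makespan_{opt}(A_i) \geq makespan_{opt}(A_i^j)$ for each $j$), which the paper asserts without elaboration.
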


\begin{proof}
Since $\lambda_{\max}^j=(2^{j+1}-1)$,
Corollary \ref{lemma:competitive-ratio-subgroup}
gives for each subgroup $A_{i}^j$ competitive ratio
$$CR_{Clairvoyant} (A_{i}^j)
\leq 4 \cdot \min \left \{2^{j+1}-1, \frac{s/\beta}{2^{j+1}-1} \right \}
\leq 8 \cdot \min \left \{2^{j}, \frac{s/\beta}{2^{j}} \right \}.$$
Let $\psi = \frac{\log(s/\beta)}{2}$.
Note that $\min \left \{2^{j}, \frac{s/\beta}{2^{j}} \right \} \leq 2^{j}$,
$\forall j \in [0, \lfloor \psi \rfloor]$;
and $\min \left \{2^{j}, \frac{s/\beta}{2^{j}} \right \} \leq \frac{s/\beta}{2^{j}} = 2^{2 \psi - j}$ ,
$\forall j \in [\lfloor \psi \rfloor + 1, \kappa-1]$.
Group $A_i$ contains $\kappa$ subgroups of transactions.
In the worst case, Algorithm {\sf Clairvoyant} will
commit the transactions in each subgroup according to their
order starting from the lowest order subgroup
and ending at the highest order subgroup,
since that's the order that the transactions are assigned a high priority.
Therefore,
\begin{eqnarray*}
CR_{Clairv.} (A_{i})
& \leq & \sum_{j=0}^{\kappa-1} CR_{Clairv.} (A_{i}^j)\\
& = &  \sum_{j=0}^{\lfloor \psi \rfloor} CR_{Clairv.} (A_{i}^j)
      + \sum_{j = \lfloor \psi \rfloor + 1}^{\kappa-1} CR_{Clairv.} (A_{i}^j) \\
& \leq & 8 \cdot \left ( \sum_{j=0}^{\lfloor \psi \rfloor} 2^{j}
        + \sum_{j = \lfloor \psi \rfloor + 1}^{k - 1} 2^{2 \psi - j} \right )
 \leq   8 \cdot \left ( 2 \cdot 2^{\psi}
        + 2 \cdot 2^{\psi} \right )
 =  32 \cdot \sqrt{\frac{s}{\beta}}.
\end{eqnarray*}
\end{proof}

\begin{theorem}[Competitive Ratio of {\sf Clairvoyant}]
\label{theorem:clairvoyant}
For set of transactions $\cT$,
Algorithm {\sf Clairvoyant} has competitive ratio
$CR_{Clairvoyant} (\cT) = O \left (\ell \cdot \sqrt{\frac{s}{\beta}} \right )$.
\end{theorem}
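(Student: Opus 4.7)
The plan is to reduce the global bound to a sum of per-group bounds, using the fact that Algorithm \textsf{Clairvoyant} processes groups strictly in order of their priority rank and that the optimum on each group is dominated by the optimum on $\cT$.

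First, I will argue that the per-group analysis in Lemma~\ref{lemma:offline-competitive-group} remains valid in the presence of transactions from other groups. When $\wA_t = A_i$, every transaction outside $A_i$ either belongs to a group of lower order (and hence has already committed and is no longer present) or belongs to a group of higher order (and hence, by the priority assignment $\Pi_t^h \subseteq \wA_t$, is in low priority whenever it conflicts with a member of $A_i$). Thus no higher-order transaction can delay a transaction in $A_i$: in any conflict, the $A_i$-transaction wins and the other is aborted. Consequently, the time spent by the algorithm while $\wA_t = A_i$ is at most $\mathit{makespan}_{Clairvoyant}(A_i)$ as analyzed in isolation, and by Lemma~\ref{lemma:offline-competitive-group} this is at most $32\sqrt{s/\beta} \cdot \mathit{makespan}_{opt}(A_i)$. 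Since $A_i \subseteq \cT$, $\mathit{makespan}_{opt}(A_i) \leq \mathit{makespan}_{opt}(\cT)$.

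Second, I will handle the read-only group $B$, which has highest order and so is processed only after every writing group has completed. Transactions in $B$ do not conflict with each other, so once all writing transactions have committed, every remaining transaction in $B$ can run uninterrupted in parallel; the additional time is at most $\tau_{\max}$. Since any schedule must run the longest transaction at least once, $\tau_{\max} \leq \mathit{makespan}_{opt}(\cT)$.

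Combining these, the total makespan is at most
\begin{equation*}
\sum_{i=0}^{\ell-1} \mathit{makespan}_{Clairvoyant}(A_i) + \tau_{\max}
\;\leq\; \bigl(32\,\ell\,\sqrt{s/\beta} + 1\bigr)\cdot \mathit{makespan}_{opt}(\cT),
\end{equation*}
yielding the claimed competitive ratio $O\!\left(\ell\cdot\sqrt{s/\beta}\right)$. The only delicate step is the first one: justifying that cross-group conflicts do not inflate the per-group makespan, which is precisely where the ordered priority rule $\Pi_t^h \subseteq \wA_t$, together with the group-processing order, is essential.
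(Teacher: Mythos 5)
Your proposal is correct and follows essentially the same route as the paper's proof: decompose $\cT$ into the groups $A_0,\ldots,A_{\ell-1}$ plus $B$, apply Lemma~\ref{lemma:offline-competitive-group} to each $A_i$, use $makespan_{opt}(A_i) \leq makespan_{opt}(\cT)$, and account for $B$ with an additive constant, giving $CR_{Clairvoyant}(\cT) \leq 32 \cdot \ell \cdot \sqrt{s/\beta}+1$; the paper merely asserts this group-by-group composition, whereas you spell out why it is valid. One clause in your justification is overstated --- a low-priority $A_i$-transaction need not win a conflict with a (low-priority) higher-order transaction, since same-priority conflicts are resolved arbitrarily --- but this is harmless, because the per-group analysis only requires that high-priority transactions, which are always drawn from the lowest-order pending subgroup $\wA_t$, execute uninterrupted, and that any low-priority pending transaction of $\wA_t$ conflicts with some high-priority one; both facts hold regardless of the presence of higher-order transactions.
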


\begin{proof}
As there are $\ell$ groups of transactions $A_i$, and one group $B$,
in the worst case, Algorithm {\sf Clairvoyant} will
commit the transactions in each group according to their
order starting from the lowest order group
and ending at the highest order group.
Clearly, the algorithm will execute the read-only transactions in group $B$
in optimal time.
Therefore, using Lemma \ref{lemma:offline-competitive-group}, we obtain:
\begin{eqnarray*}
CR_{Clairvoyant}(\cT)
& \leq &  \sum_{i=0}^{\ell-1} CR_{Clairvoyant}(A_{i}) + CR_{Clairvoyant}(B)\\
& \leq & \sum_{i=0}^{\ell-1} 32 \cdot \sqrt{\frac{s}{\beta}} + 1
 =  32 \cdot  \ell \cdot  \sqrt{\frac{s}{\beta}} + 1.
\end{eqnarray*}
\end{proof}

The corollary below follows immediately from Theorem \ref{theorem:clairvoyant}.

\begin{corollary}[Balanced Workload]
\label{corollary:online-competitive-clairvoyant}
For balanced workload $\cT$ ($\beta=O(1)$) and when $\ell=O(1)$,
Algorithm {\sf Clairvoyant} has competitive ratio $CR_{Clairvoyant} (\cT) = O(\sqrt{s})$.
\end{corollary}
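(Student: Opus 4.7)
The plan is to obtain the corollary as an immediate instantiation of Theorem \ref{theorem:clairvoyant}; no new machinery is needed. First I would recall the general bound already proved, namely
\[
CR_{Clairvoyant}(\cT) \;=\; O\!\left(\ell \cdot \sqrt{\frac{s}{\beta}}\right),
\]
and then substitute the two hypotheses of the corollary into this expression.

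Next, I would unpack what the balanced-workload hypothesis actually provides. By the definition of a balanced workload, $\beta = \Theta(1)$, so in particular $1/\beta = O(1)$; hence $\sqrt{s/\beta} = O(\sqrt{s})$. Similarly, $\ell = O(1)$ by hypothesis, so the leading factor contributes only a constant. Multiplying the two estimates gives $\ell \cdot \sqrt{s/\beta} = O(1) \cdot O(\sqrt{s}) = O(\sqrt{s})$, which is the claimed competitive ratio.

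The only point that requires a sentence of justification (rather than being strictly a one-line substitution) is the translation from the definition of a balanced workload to the bound $1/\beta = O(1)$; this uses the fact that $\beta = \Theta(1)$ means $\beta$ is bounded away from zero by an absolute constant, so $\sqrt{1/\beta}$ is an absolute constant too. There is no genuine obstacle to overcome, since Theorem \ref{theorem:clairvoyant} already contains the full quantitative content; the corollary is essentially a statement of the theorem in a convenient regime of parameters.
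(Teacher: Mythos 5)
Your proposal is correct and matches the paper exactly: the paper derives this corollary as an immediate substitution of $\beta = \Theta(1)$ and $\ell = O(1)$ into the bound $O\left(\ell \cdot \sqrt{\frac{s}{\beta}}\right)$ of Theorem \ref{theorem:clairvoyant}, with no additional argument. Your remark that the balanced-workload hypothesis must be read as $\beta$ bounded away from zero (i.e., $\Theta(1)$, not merely $O(1)$ as the corollary's parenthetical loosely writes) is exactly the right point of care, and it is all that is needed.
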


\section{Non-Clairvoyant Algorithm}\label{section:online}
We present and analyze Algorithm {\sf Non-Clairvoyant}
(see Algorithm \ref{algorithm:non-clairvoyant}).
This algorithm is similar to {\sf Clairvoyant} given at Section \ref{section:offline}
with the difference that the conflicts are resolved using priorities
which are determined without the explicit knowledge of the conflict graph.

Similar to Algorithm {\sf Clairvoyant},
the transactions are organized into groups and subgroups.
Lower order subgroups have always higher priority than
higher order subgroups.
At each time step $t$,
let $\wA_t$ denote the lowest order subgroup.
Clearly,
the transactions in $\wA_t$ have higher priority
than the transactions in all other subgroups,
and in case of conflicts only the transactions in $\wA_t$ win.
When transactions in the same subgroup conflict,
the conflicts are resolved according to random priority numbers.
When a transaction starts execution it chooses uniformly at random
a discrete number $r(T) \in [1, n]$.
In case of a conflict of transaction $T_w$
with another transaction $T_x$ in the same subgroup
with $r(T_x) < r(T_w)$,
then $T_x$ aborts $T_w$,
and otherwise $T_w$ aborts $T_x$.
When transaction $T_w$ restarts, it cannot abort $T_x$
until $T_x$ has been committed or aborted.
After every abort, the newly started transaction chooses again
a new discrete number uniformly at random in the interval $[1, n]$.
The idea of randomized priorities has been introduced originally by
Schneider and Wattenhofer \cite{Schneider09}
in their Algorithm {\sf RandomizedRounds}.

This algorithm is non-clairvoyant in the sense that it does not depend
on knowing explicitly the conflict graph to resolve conflicts.
The algorithm is greedy but does have the pending commit property.
The groups and subgroups can be implemented in the algorithm
since we assume that each transaction knows
its execution time and the number of resources that it accesses.
Clearly, the algorithm computes the schedule in polynomial time.

\begin{algorithm}[t]
{\small
\KwIn{A set $\cT$ of $n$ transactions with global balancing ratio
$\beta$\;}
\KwOut{A greedy execution schedule\;}
\BlankLine
\nlset{-} Divide transactions into $\ell=\lceil\log(\frac{\tau_{\max}}{\tau_{\min}})\rceil+1$
groups $A_0, A_1, \cdots, A_{\ell-1}$
in such a way that $A_i$ contains transactions with execution time duration
in range $[2^i \cdot \tau_{\min}, (2^{i+1}-1) \cdot \tau_{\min}]$;
Read-only transactions are placed in special group $B$\;

\nlset{-} Divide $A_i$ again into $\kappa=\lceil\log s\rceil + 1$
subgroups $A_{i}^0,A_{i}^1, \cdots, A_{i}^{\kappa-1}$
in a way that each subgroup $A_{i}^j$
contains transactions that access a number of resource in the range
$[2^j,2^{j+1}-1]$\;

\nlset{-} Order the groups and subgroups such that $A_i^j<A_k^l$ if $i<k$ or $i=k \wedge j<l$;
special group $B$ has highest order\;

\BlankLine

\ForEach{time step $t = 0, ~1, ~2, ~3, \ldots$}{

Execute all pending transactions; \tcp{at $t=0$ issue all transactions}

\textbf{On (re)start} of transaction $T$:\\
{\Indp $r(T) \gets $ random integer in $[1,n]$\;}

\textbf{On conflict} of transaction $T_u \in A_i^j$ with transaction $T_v \in A_k^l$:\\
{\Indp

\lIf{$A_i^j < A_k^l$}{$abort(T_u, T_v)$\;}
\lElseIf{$A_i^j > A_k^l$}{$abort(T_v, T_u)$\;}

\Indp

\lElseIf{$r(T_u) < r(T_v)$} {$abort(T_u, T_v)$}  \tcp*{The case $A_i^j = A_k^l$}

\Indp

\lElse{$abort(T_v, T_u)$\;}

\tcp{In case a transaction $T_u$ aborts $T_v$ because $r(T_u) < r(T_v)$,
then when $T_v$ restarts it cannot abort $T_u$ until $T_u$ commits or aborts}
}
}
}
\caption{{\sf Non-Clairvoyant}}
\label{algorithm:non-clairvoyant}
\end{algorithm}

\subsection{Analysis of Non-Clairvoyant Algorithm}

In the analysis given below,
we study the properties of Algorithm {\sf Non-Clairvoyant} and give its competitive ratios.
We use the following adaptation
of the response time analysis of Algorithm {\sf RandomizedRounds} given in \cite{Schneider09}.
It uses the following Chernoff bound:

\begin{lemma}[Chernoff Bound]\label{lemma:chernoff2}
Let $X_1, X_2, \ldots, X_n$ be independent Poisson trials
such that, for $1 \leq i \leq n$, ${\bf Pr}(X_i = 1) = pr_i$,
where $0 < pr_i < 1$.
Then, for $X = \sum_{i=1}^{n} X_i$,
$\mu = {\bf E}[X] = \sum_{i=1}^{n} pr_i$,
and any $0 < \delta \leq 1, {\bf Pr}(X < (1-\delta) \mu) < e^{-\delta^2 \mu / 2}.$
\end{lemma}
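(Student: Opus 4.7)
The plan is to prove this via the standard moment generating function (MGF) method, i.e.\ the Bernstein/Chernoff template adapted to the lower tail. The statement is a classical textbook result, so the work is really to organize the four standard steps: exponential Markov, product form of the MGF, the $1+x \leq e^x$ relaxation, and the optimization over the free parameter.

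First I would fix a parameter $s > 0$ and rewrite the event $\{X < (1-\delta)\mu\}$ as $\{e^{-sX} > e^{-s(1-\delta)\mu}\}$, so that Markov's inequality gives
\[
{\bf Pr}(X < (1-\delta)\mu) \;\leq\; \frac{{\bf E}[e^{-sX}]}{e^{-s(1-\delta)\mu}}.
\]
Next, since the $X_i$ are independent,
\[
{\bf E}[e^{-sX}] \;=\; \prod_{i=1}^{n} {\bf E}[e^{-sX_i}] \;=\; \prod_{i=1}^{n}\bigl(1 + pr_i(e^{-s}-1)\bigr),
\]
and applying $1+x \leq e^x$ to each factor with $x = pr_i(e^{-s}-1)$ yields the bound ${\bf E}[e^{-sX}] \leq e^{\mu(e^{-s}-1)}$. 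Combining these two inequalities gives
\[
{\bf Pr}(X < (1-\delta)\mu) \;\leq\; \exp\bigl(\mu(e^{-s}-1) + s(1-\delta)\mu\bigr).
\]

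Then I would optimize over $s>0$. Differentiating the exponent in $s$ and setting to zero gives $e^{-s} = 1-\delta$, i.e.\ $s = -\ln(1-\delta)$, which is legal since $0 < \delta \leq 1$ (with the $\delta = 1$ case handled as a limit). Substituting back, the right-hand side becomes
\[
\exp\bigl(-\mu\delta \;-\; \mu(1-\delta)\ln(1-\delta)\bigr).
\]

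The main (and only mildly non-routine) obstacle is the final analytic inequality needed to collapse this into the advertised form $e^{-\delta^2 \mu/2}$: one must show $(1-\delta)\ln(1-\delta) \geq -\delta + \delta^2/2$ on $(0,1]$. I would prove this by defining $f(\delta) = (1-\delta)\ln(1-\delta) + \delta - \delta^2/2$, checking $f(0)=0$, and verifying $f'(\delta) = -\ln(1-\delta) - \delta \geq 0$ for $\delta \in [0,1)$ (which follows from the Taylor expansion $-\ln(1-\delta) = \sum_{k \geq 1}\delta^k/k \geq \delta$). Plugging this bound into the exponent yields exponent $\leq -\mu\delta^2/2$, completing the proof with strict inequality as stated.
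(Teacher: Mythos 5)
Your proof is correct. Note that the paper never actually proves Lemma \ref{lemma:chernoff2}: it is quoted as a standard probabilistic fact and used as a black box inside the proof of Lemma \ref{lemma:roger}, so there is no internal argument to compare yours against. Your moment-generating-function derivation (exponential Markov, factorization by independence, the $1+x \leq e^x$ relaxation, optimization at $e^{-s} = 1-\delta$, and the calculus inequality $(1-\delta)\ln(1-\delta) \geq -\delta + \delta^2/2$) is exactly the canonical textbook argument the paper implicitly relies on, and all of your steps check out, including the derivative computation $f'(\delta) = -\ln(1-\delta) - \delta \geq 0$.

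Two small refinements would make it airtight. First, your chain of inequalities as written yields only ${\bf Pr}(X < (1-\delta)\mu) \leq e^{-\delta^2 \mu/2}$, while the lemma asserts strict inequality. Strictness is recovered by observing that $1+x < e^x$ holds strictly for $x \neq 0$: since each $pr_i > 0$ and $s = -\ln(1-\delta) > 0$, each factor satisfies $1 + pr_i(e^{-s}-1) < e^{pr_i(e^{-s}-1)}$, hence ${\bf E}[e^{-sX}] < e^{\mu(e^{-s}-1)}$ and the strict inequality propagates to the final bound. Second, the case $\delta = 1$ needs no limiting argument: $X \geq 0$ always, so ${\bf Pr}(X < 0) = 0 < e^{-\mu/2}$ trivially. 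With these two observations your proof is complete and rigorous.
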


\begin{lemma}[Adaptation from Schneider and Wattenhofer \cite{Schneider09}]
\label{lemma:roger}
Given a transaction scheduling problem with $n$ concurrent transactions,
where each transaction has execution time at most $\tau$,
the time span a transaction $T$ needs from the moment it is issued until
commit is $16 \cdot e \cdot (d_{T}+1) \cdot \tau \cdot \ln n$
with probability at least $1-\frac{1}{n^2}$,
where $d_T$ is the number of transactions conflicting with $T$.
\end{lemma}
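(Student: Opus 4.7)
The plan is to fix an arbitrary transaction $T$ and bound the probability that it fails to commit within $K \cdot \tau$ time, where $K = 16 \cdot e \cdot (d_T + 1) \cdot \ln n$. I would partition the interval during which $T$ is pending into a sequence of \emph{attempts} (re-executions), each of length at most $\tau$, since any transaction that runs uninterrupted commits within $\tau$ time. Hence, if $T$ has not committed by time $K \tau$, then it must have made at least $K$ unsuccessful attempts.

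First I would analyze the per-attempt success probability. At the start of an attempt, $T$ draws a fresh random priority $r(T)$ uniformly from $[1,n]$. If $r(T)$ is strictly smaller than $r(T')$ for every conflicting neighbor $T'$ that contends with $T$ during the attempt, then no neighbor can abort $T$ and $T$ commits by the end of the attempt. Since there are at most $d_T$ such neighbors, each holding an independent uniform priority from $[1,n]$, the probability that the fresh $r(T)$ is the unique minimum among the at most $d_T + 1$ contending priorities is at least $\frac{1}{e(d_T + 1)}$, absorbing the $O(d_T/n)$ tie-breaking correction into the constant. The ``no re-abort'' clause of the algorithm only helps $T$: a neighbor that already aborted $T$ cannot abort $T$ again until it commits or aborts itself, so the effective competition is always upper bounded by $d_T$.

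Next I would view the $K$ attempts as Bernoulli trials and stochastically dominate them from below by i.i.d.\ Bernoulli$(p)$ trials with $p = 1/(e(d_T + 1))$. Let $X$ count the successes in the dominating sequence, so $\mu = {\bf E}[X] \geq K p = 16 \ln n$. Applying Lemma \ref{lemma:chernoff2} with $\delta = 1 - 1/\mu$, so that $(1-\delta)\mu = 1$, yields
\[
{\bf Pr}(X < 1) \;<\; e^{-\delta^{2} \mu / 2} \;\leq\; e^{-\mu/4} \;\leq\; n^{-4} \;\leq\; \frac{1}{n^{2}},
\]
where the middle inequality uses $\mu \geq 16 \ln n \geq 4$. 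Consequently, with probability at least $1 - 1/n^2$ at least one of the first $K$ attempts succeeds, and therefore $T$ commits within $K \cdot \tau = 16 \cdot e \cdot (d_T+1) \cdot \tau \cdot \ln n$ time.

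The main obstacle will be justifying the stochastic domination rigorously, since across successive attempts the neighbors' random priorities persist while they execute and the per-attempt success events of $T$ are not manifestly independent. I would address this by conditioning on the entire past history and observing that $T$'s freshly drawn $r(T)$ at each (re)start is independent of everything else, and that at any moment the competing neighbors number at most $d_T$, so the one-step conditional success probability is always at least $p$; a standard coupling then produces the dominating i.i.d.\ sequence. A secondary subtlety is tie-breaking among integer priorities, but a union bound over the $d_T$ neighbors contributes only an $O(d_T/n)$ correction that is absorbed into the constant $16 \cdot e$.
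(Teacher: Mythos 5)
Your proposal is correct and takes essentially the same route as the paper's proof: fresh uniform priorities give a per-attempt success probability of at least $\frac{1}{e(d_T+1)}$ (unique minimum among the at most $d_T$ neighbors), each attempt lasts at most $\tau$, and the Chernoff bound of Lemma \ref{lemma:chernoff2} over $16 \cdot e \cdot (d_T+1) \cdot \ln n$ attempts yields failure probability below $1/n^2$. The only deviations are cosmetic: the paper instantiates Chernoff with $\delta = 1/2$ (showing at least $8\ln n$ successes occur, though one suffices) rather than your $\delta = 1 - 1/\mu$, and it silently treats the attempts as independent trials, a point you handle more carefully via conditioning and coupling.
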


\begin{proof}
Consider the respective conflict graph $G$ of the problem with the $n$ transaction.
Let $N_T$ denote the set of conflicting transactions for $T$
(these are the neighbors of $T$ in $G$).
Let $r(T)$ denote the random priority number choice of $T$ in range $[1,n]$.
The probability that for transaction $T$
no transaction $T' \in N_{T}$ has the same random number is:
$${\bf Pr}(\nexists T' \in N_{T} | r(T) = r(T'))= \left (1-\frac{1}{n} \right)^{d_{T}} \geq \left ( 1-\frac{1}{n} \right )^n \geq \frac{1}{e}.$$
The probability that $r(T)$ is at least as small as $r(T')$
for any transaction $T' \in N_{T}$ is $\frac{1}{d_{T}+1}$.
Thus, the chance that $r(T)$ is smallest and different among all its neighbors in $N_T$
is at least $\frac{1}{e \cdot(d_{T}+1)}$.
If we conduct $16 \cdot e \cdot (d_{T}+1) \cdot \ln n$ trials,
each having success probability $\frac{1}{e \cdot (d_{T}+1)}$,
then the probability that the number of successes $Z$ is less than $8 \ln n$ becomes:
${\bf Pr}(Z < 8 \cdot \ln n) <e^{-2\cdot \ln n} = 1 / n^2$,
using the Chernoff bound of Lemma \ref{lemma:chernoff2}.
Since every transaction has execution time at most $\tau$,
the total time spent until a transaction commits is
at most $16 \cdot e \cdot (d_{T}+1) \cdot \tau \cdot \ln n$,
with probability at least $1 - 1/n^2$.
\end{proof}

We now give competitive bounds for some subgroup $A_i^j$
and later extend the results to all the transactions in $\cT$.
The proofs are similar as in the analysis of Algorithm {\sf Clairvoyant}
and can be found in the appendix.

\begin{lemma}
\label{observation:3}
If we only consider transactions in subgroup $A_i^j$,
then the competitive ratio is bounded by
$CR_{Non-Clairvoyant}(A_i^j)\leq 64 \cdot e \cdot \lambda_{\max}^j \cdot \ln n$
with probability at least $1-\frac{|A_i^j|}{n^2}.$
\end{lemma}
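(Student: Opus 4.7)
The plan is to adapt the proof of Lemma \ref{observation:1} by replacing its deterministic upper bound on a transaction's commit time with the high-probability bound given by Lemma \ref{lemma:roger}. Since we are isolating subgroup $A_i^j$, the priority rule of Algorithm {\sf Non-Clairvoyant} reduces, inside this subgroup, to exactly the randomized-priority rule of Algorithm {\sf RandomizedRounds}, so Lemma \ref{lemma:roger} applies directly.

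First, I would fix an arbitrary transaction $T \in A_i^j$. In the conflict graph restricted to $A_i^j$, the degree $d_T$ of $T$ is bounded by $\lambda_{\max}^j \cdot \gamma_{\max}^j$, by the same pigeonhole argument used in Lemma \ref{observation:1}: $T$ touches at most $\lambda_{\max}^j$ resources, and each resource is written by at most $\gamma_{\max}^j$ transactions of $A_i^j$. Each transaction in $A_i^j$ has execution time at most $\tau_{\max}^j$. Invoking Lemma \ref{lemma:roger} with $\tau = \tau_{\max}^j$, we conclude that $T$ commits within $16 \cdot e \cdot (\lambda_{\max}^j \cdot \gamma_{\max}^j + 1) \cdot \tau_{\max}^j \cdot \ln n$ time steps from the moment it is issued, with probability at least $1 - 1/n^2$.

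Next, I would take a union bound over all $|A_i^j|$ transactions in the subgroup. With probability at least $1 - |A_i^j|/n^2$, every transaction in $A_i^j$ commits by the above time, and therefore
\[
makespan_{\sf Non-Clairvoyant}(A_i^j) \leq 16 \cdot e \cdot (\lambda_{\max}^j \cdot \gamma_{\max}^j + 1) \cdot \tau_{\max}^j \cdot \ln n.
\]
For the lower bound I would reuse the one from Lemma \ref{observation:1}: there exists a resource accessed for write by $\gamma_{\max}^j$ transactions of $A_i^j$, and these must serialize, giving $makespan_{opt}(A_i^j) \geq \gamma_{\max}^j \cdot \tau_{\min}^j$. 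Taking the ratio and applying $\tau_{\max}^j \leq 2 \tau_{\min}^j$ and $\lambda_{\max}^j \gamma_{\max}^j + 1 \leq 2 \lambda_{\max}^j \gamma_{\max}^j$ yields the claimed bound $64 \cdot e \cdot \lambda_{\max}^j \cdot \ln n$.

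The only real subtlety, which is not so much an obstacle as a sanity check, is that Lemma \ref{lemma:roger} was proved for the standalone {\sf RandomizedRounds} algorithm, whereas here {\sf Non-Clairvoyant} also compares the subgroup labels $A_i^j$ vs.\ $A_k^l$ before consulting the random priorities. Restricting to a single subgroup removes this first comparison entirely, so the conflict-resolution rule among these transactions is exactly the randomized-priority rule of Lemma \ref{lemma:roger}, and the random numbers $r(T) \in [1,n]$ drawn by the algorithm match the distribution the lemma assumes. Thus Lemma \ref{lemma:roger} transfers without modification, and the computation above completes the proof.
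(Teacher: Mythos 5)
Your proof is correct and follows essentially the same route as the paper's own: bound the degree of a transaction in $A_i^j$ by $\lambda_{\max}^j \cdot \gamma_{\max}^j$, invoke Lemma \ref{lemma:roger} with $\tau = \tau_{\max}^j$, take a union bound over the $|A_i^j|$ transactions, reuse the optimal-makespan lower bound $\gamma_{\max}^j \cdot \tau_{\min}^j$ from Lemma \ref{observation:1}, and simplify the ratio using $\tau_{\max}^j \leq 2\tau_{\min}^j$. Your added remark that the subgroup-order comparison vanishes when only one subgroup is present (so that the conflict resolution reduces exactly to the randomized rule of Lemma \ref{lemma:roger}) is a justification the paper leaves implicit, but it does not change the argument.
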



\begin{lemma}
\label{observation:4}
If we only consider transactions in subgroup $A_i^j$,
then the competitive ratio is bounded by
$CR_{Non-Clairvoyant}(A_i^j)\leq 64 \cdot e \cdot \frac{s/\beta}{\lambda_{\max}^j} \cdot \ln n $
with probability at least $1-\frac{|A_i^j|}{n^2}.$
\end{lemma}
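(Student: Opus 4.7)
The plan is to mirror the argument used for Lemma \ref{observation:2} in the Clairvoyant analysis, but replace the deterministic pending-commit upper bound on the makespan by a randomized bound obtained from Lemma \ref{lemma:roger}. The lower bound on the optimal makespan is identical, so the main work is producing a high-probability upper bound of the form $O(|A_i^j| \cdot \tau_{\max}^j \cdot \ln n)$ for Algorithm {\sf Non-Clairvoyant} restricted to subgroup $A_i^j$.

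First, I would observe that since we are restricting attention to transactions in $A_i^j$, no higher-priority transactions from lower-order subgroups interfere, and conflicts inside $A_i^j$ are resolved purely by the random priority numbers $r(T) \in [1,n]$. This is exactly the setting analyzed by Schneider and Wattenhofer, so Lemma \ref{lemma:roger} applies with $\tau = \tau_{\max}^j$ and with $d_T$ equal to the degree of $T$ in the conflict graph $G(A_i^j)$. Trivially $d_T + 1 \leq |A_i^j|$, so each transaction $T \in A_i^j$ commits within $16 \cdot e \cdot |A_i^j| \cdot \tau_{\max}^j \cdot \ln n$ steps with probability at least $1 - 1/n^2$. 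A union bound over the $|A_i^j|$ transactions then gives
\[
makespan_{Non-Clairvoyant}(A_i^j) \leq 16 \cdot e \cdot |A_i^j| \cdot \tau_{\max}^j \cdot \ln n
\]
with probability at least $1 - |A_i^j|/n^2$.

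Next, I would invoke verbatim the pigeonhole lower bound from the proof of Lemma \ref{observation:2}: each writing transaction $T \in A_i^j$ writes at least $\lambda_w(T) \geq \beta \cdot \lambda_{\min}^j \geq \beta \cdot \lambda_{\max}^j / 2$ resources, so some resource is written by at least $|A_i^j| \cdot \beta \cdot \lambda_{\max}^j / (2s)$ transactions, all of which must serialize even in the optimal schedule. Hence
\[
makespan_{opt}(A_i^j) \geq \frac{|A_i^j| \cdot \beta \cdot \lambda_{\max}^j}{2s} \cdot \tau_{\min}^j.
\]
Dividing the two bounds and using $\tau_{\max}^j \leq 2 \tau_{\min}^j$ yields the claimed $64 \cdot e \cdot (s/\beta)/\lambda_{\max}^j \cdot \ln n$ ratio, still with probability at least $1 - |A_i^j|/n^2$.

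The main subtlety I would be careful about is justifying that Lemma \ref{lemma:roger} applies cleanly to the subgroup in isolation. Two points need attention: (i) the priority rule in Algorithm \ref{algorithm:non-clairvoyant} only breaks ties by $r(\cdot)$ when two transactions lie in the same subgroup, so restricting to $A_i^j$ genuinely reduces to the {\sf RandomizedRounds} setting; and (ii) Lemma \ref{lemma:roger} bounds the commit time from the moment each $T$ is \emph{issued}, whereas the makespan is measured from time $0$. Since all transactions of $A_i^j$ are issued at $t=0$ (per the ``at $t=0$ issue all transactions'' line of the algorithm), the per-transaction response-time bound immediately doubles as a makespan bound, and the union bound applied over $|A_i^j| \leq n$ transactions is the only probabilistic step needed.
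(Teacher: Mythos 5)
Your proposal is correct and follows essentially the same route as the paper's own proof: bound the makespan via Lemma \ref{lemma:roger} using the trivial degree bound $d_T \leq |A_i^j|-1$ plus a union bound, reuse the pigeonhole lower bound on $makespan_{opt}(A_i^j)$ from Lemma \ref{observation:2}, and divide, using $\tau_{\max}^j \leq 2\tau_{\min}^j$. Your extra care about why Lemma \ref{lemma:roger} applies to the subgroup in isolation (tie-breaking only by $r(\cdot)$ within a subgroup, and all transactions issued at $t=0$) is a sound elaboration of what the paper leaves implicit.
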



From Lemmas \ref{observation:3} and \ref{observation:4}, we obtain:

\begin{corollary} \label{lemma:competitive-ratio-subgroup-non}
If we only consider transactions in subgroup $A_i^j$,
then the competitive ratio of the algorithm is bounded by
$CR_{Non-Clairvoyant} (A_{i}^j) \leq 64 \cdot e\cdot\min \left\{\lambda_{\max}^j, \frac{s/\beta}{\lambda_{\max}^j}\right\}\cdot \ln n$
with probability at least $1-\frac{|A_i^j|}{n^2}.$
\end{corollary}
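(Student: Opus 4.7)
The plan is to observe that Corollary \ref{lemma:competitive-ratio-subgroup-non} is essentially a direct combination of Lemmas \ref{observation:3} and \ref{observation:4}, so the argument should be only a few lines. The only subtlety is making sure the high-probability statement is coherent: both lemmas give the same failure probability $|A_i^j|/n^2$, and I want to conclude that the minimum of the two bounds holds with that same probability (not twice as large).

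First, I would recall that both Lemma \ref{observation:3} and Lemma \ref{observation:4} are proved on top of the single underlying event $\mathcal{E}$ that every transaction $T \in A_i^j$ commits within $16 \cdot e \cdot (d_T + 1) \cdot \tau_{\max}^j \cdot \ln n$ steps from the time it is issued, which by Lemma \ref{lemma:roger} and a union bound over the $|A_i^j|$ transactions holds with probability at least $1 - |A_i^j|/n^2$. Conditioned on $\mathcal{E}$, both upper bounds on $\mathit{makespan}_{Non\text{-}Clairvoyant}(A_i^j)$ used in those lemmas hold simultaneously. The respective lower bounds on $\mathit{makespan}_{opt}(A_i^j)$ (serialization at the worst-contended resource, using $\gamma_{\max}^j$ in one case and the pigeonhole $|A_i^j|\cdot \beta \cdot \lambda_{\max}^j / (2s)$ in the other) are deterministic properties of $\mathit{opt}$ and hold unconditionally.

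So on the event $\mathcal{E}$, both competitive-ratio bounds hold, and the true competitive ratio is therefore at most the smaller of the two. Packaging this, I would write:
\[
CR_{Non\text{-}Clairvoyant}(A_i^j) \;\leq\; \min\bigl\{\,64 \cdot e \cdot \lambda_{\max}^j \cdot \ln n,\; 64 \cdot e \cdot \tfrac{s/\beta}{\lambda_{\max}^j} \cdot \ln n\,\bigr\} \;=\; 64 \cdot e \cdot \min\bigl\{\lambda_{\max}^j,\; \tfrac{s/\beta}{\lambda_{\max}^j}\bigr\} \cdot \ln n,
\]
with probability at least $1 - |A_i^j|/n^2$, which is exactly the stated inequality.

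I do not expect any real obstacle here; the only thing to be careful about is avoiding a spurious factor of two in the failure probability by noting that the two lemmas share the same good event $\mathcal{E}$, rather than invoking an additional union bound between their conclusions. Once that is pointed out, the corollary follows immediately.
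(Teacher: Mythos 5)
Your proposal is correct and follows essentially the same route as the paper, which states the corollary as an immediate consequence of Lemmas \ref{observation:3} and \ref{observation:4} without further argument. Your explicit observation that both lemmas rest on the same good event (every $T \in A_i^j$ commits within $16 \cdot e \cdot (d_T+1) \cdot \tau_{\max}^j \cdot \ln n$ steps, with the two different deterministic bounds on $d_T$ applied afterwards) is exactly the justification needed for the failure probability to remain $|A_i^j|/n^2$ rather than doubling, a point the paper leaves implicit.
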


We now provide a bound for the performance of individual groups
which will help to provide bounds for all the transactions.

\begin{lemma}
\label{lemma:online-competitive-group}
If we only consider transactions in group $A_i$,
then the competitive ratio of the algorithm is bounded by
$CR_{Non-Clairvoyant} (A_{i}) \leq 512 \cdot e \cdot \sqrt{\frac{s}{\beta}} \cdot \ln n$
with probability at least $1-\frac{|A_i|}{n^2}.$
\end{lemma}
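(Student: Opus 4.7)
The plan is to mirror the structure of the proof of Lemma \ref{lemma:offline-competitive-group} for the {\sf Clairvoyant} algorithm, replacing the subgroup bound of Corollary \ref{lemma:competitive-ratio-subgroup} with its randomized counterpart Corollary \ref{lemma:competitive-ratio-subgroup-non}, and tacking on a union bound for the failure probabilities. Since lower order subgroups always preempt higher order ones, the makespan of the whole group $A_i$ is at most the sum of the makespans of the $\kappa$ subgroups $A_i^0, \ldots, A_i^{\kappa-1}$ executed in order, so $CR_{Non\text{-}Clairvoyant}(A_i) \leq \sum_{j=0}^{\kappa-1} CR_{Non\text{-}Clairvoyant}(A_i^j)$.

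First I would apply Corollary \ref{lemma:competitive-ratio-subgroup-non} to each subgroup. Using $\lambda_{\max}^j = 2^{j+1}-1 \leq 2 \cdot 2^j$, the subgroup bound becomes
\[
CR_{Non\text{-}Clairvoyant}(A_i^j) \leq 128 \cdot e \cdot \min\!\left\{2^{j}, \frac{s/\beta}{2^{j}}\right\} \cdot \ln n,
\]
with failure probability at most $|A_i^j|/n^2$.

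Next, following the same two-regime split used in Lemma \ref{lemma:offline-competitive-group}, I would set $\psi = \tfrac{1}{2}\log(s/\beta)$ and use the $2^{j}$ branch of the minimum for $j \in [0,\lfloor\psi\rfloor]$ and the $2^{2\psi-j}$ branch for $j \in [\lfloor\psi\rfloor+1, \kappa-1]$. Each of the two resulting geometric sums is bounded by $2 \cdot 2^{\psi} = 2\sqrt{s/\beta}$, so altogether
\[
CR_{Non\text{-}Clairvoyant}(A_i) \leq 128 \cdot e \cdot \bigl(2 \cdot 2^{\psi} + 2 \cdot 2^{\psi}\bigr) \cdot \ln n = 512 \cdot e \cdot \sqrt{\frac{s}{\beta}} \cdot \ln n.
\]

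Finally, for the high probability statement I would take a union bound over the $\kappa$ subgroups. Since the failure probability attached to subgroup $A_i^j$ in Corollary \ref{lemma:competitive-ratio-subgroup-non} is at most $|A_i^j|/n^2$, and the subgroups $A_i^0,\ldots,A_i^{\kappa-1}$ partition $A_i$, the combined failure probability is at most $\sum_j |A_i^j|/n^2 = |A_i|/n^2$, yielding the desired $1 - |A_i|/n^2$ success probability. The only subtle point here is justifying the additive decomposition of makespan across subgroups: because the strict priority between subgroups of different order ensures that no transaction from a higher order subgroup can delay one from a lower order subgroup, the per-subgroup analyses can be concatenated without interaction, so the main obstacle is really just the careful bookkeeping of constants and the geometric sum split, both of which proceed exactly as in the clairvoyant analysis.
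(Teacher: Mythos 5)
Your proposal is correct and follows essentially the same route as the paper's own proof: apply Corollary \ref{lemma:competitive-ratio-subgroup-non} to each subgroup, repeat the geometric-sum split at $\psi = \tfrac{1}{2}\log(s/\beta)$ from Lemma \ref{lemma:offline-competitive-group} to get the factor $512 \cdot e \cdot \sqrt{s/\beta} \cdot \ln n$, and take a union bound over the subgroups using $\sum_{j} |A_i^j| = |A_i|$. In fact, you make explicit two points the paper leaves implicit (the justification of the additive decomposition across subgroups via strict inter-subgroup priority, and the constant bookkeeping), so no gaps remain.
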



\begin{theorem}[Competitive Ratio of {\sf Non-Clairvoyant}]
\label{theorem:online-competitive}
For a set of transactions $\cT$, Algorithm {\sf Non-Clairvoyant}
has competitive ratio
$CR_{Non-Clairvoyant}(\cT)=O\left (\ell  \cdot \sqrt{\frac{s}{\beta}} \cdot \log n\right )$
with probability at least $1 - \frac{1} {n}$.
\end{theorem}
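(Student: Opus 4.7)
The plan is to mirror the structure of the proof of Theorem~\ref{theorem:clairvoyant}, but augmented with a union bound to handle the probabilistic guarantees coming from the randomized priorities. The algorithm processes the groups $A_0, A_1, \ldots, A_{\ell-1}$ and the read-only group $B$ in order of priority: transactions in a lower-order group always win conflicts against transactions in a higher-order group, so in the worst case the groups are essentially processed sequentially. This means the total makespan (relative to the optimum) is bounded by the sum of the per-group competitive ratios.

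First I would observe that the read-only group $B$ incurs no conflicts with itself and therefore executes optimally, contributing at most a constant $1$ to the competitive ratio. For each of the $\ell$ writing groups $A_i$, I would invoke Lemma~\ref{lemma:online-competitive-group}, which gives $CR_{Non-Clairvoyant}(A_i) \leq 512 \cdot e \cdot \sqrt{s/\beta} \cdot \ln n$ with probability at least $1 - |A_i|/n^2$. Summing these bounds yields
\[
CR_{Non-Clairvoyant}(\cT) \leq \sum_{i=0}^{\ell-1} CR_{Non-Clairvoyant}(A_i) + CR_{Non-Clairvoyant}(B) \leq 512 \cdot e \cdot \ell \cdot \sqrt{\frac{s}{\beta}} \cdot \ln n + 1,
\]
which matches the claimed asymptotic bound $O\bigl(\ell \cdot \sqrt{s/\beta} \cdot \log n\bigr)$.

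The second step is to justify that this bound holds with probability at least $1 - 1/n$ simultaneously across all groups. Here I would apply the union bound over the $\ell$ failure events, one per group. The total failure probability is at most $\sum_{i=0}^{\ell-1} |A_i|/n^2$, and since the groups $A_i$ partition (a subset of) the $n$ transactions we have $\sum_i |A_i| \leq n$, giving a combined failure probability of at most $n/n^2 = 1/n$. Hence the desired bound holds with probability at least $1 - 1/n$, as claimed.

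There is no substantive obstacle: the work has already been done in Lemma~\ref{lemma:online-competitive-group} and its supporting lemmas (which in turn rest on the response-time analysis of Lemma~\ref{lemma:roger}). The only mildly delicate point is making sure the union bound is clean: the events ``group $A_i$ meets its bound'' are not independent, but the union bound does not require independence, and the sum $\sum_i |A_i| \leq n$ is all that is needed to collapse the failure probability into $1/n$.
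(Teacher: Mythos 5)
Your proposal is correct and follows essentially the same route as the paper's own proof: sum the per-group bounds from Lemma~\ref{lemma:online-competitive-group} over the $\ell$ groups plus the optimally-executed group $B$, then apply a union bound so that the total failure probability is $\sum_{i}|A_i|/n^2 \leq 1/n$. Your observation that $\sum_i |A_i| \leq n$ (rather than the paper's stated equality, which ignores read-only transactions) is in fact the slightly more careful statement, and the union bound goes through just the same.
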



The corollary below follows immediately from Theorem \ref{theorem:online-competitive}.

\begin{corollary}[Balanced Workload]
\label{corollary:online-competitive-equi}
For balanced workload $\cT$ ($\beta=O(1)$) and when $\ell=O(1)$,
Algorithm {\sf Non-Clairvoyant} has competitive ratio
$CR_{Non-Clairvoyant}(\cT)=O(\sqrt{s} \cdot \log n)$
with probability at least $1- \frac {1} {n}$.
\end{corollary}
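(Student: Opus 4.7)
The plan is to derive this corollary as a direct specialization of Theorem \ref{theorem:online-competitive}, which has already established the general competitive ratio $O(\ell \cdot \sqrt{s/\beta} \cdot \log n)$ for Algorithm {\sf Non-Clairvoyant} with probability at least $1 - 1/n$. Since the theorem does all the substantive work (bounding each subgroup $A_i^j$ via Corollary \ref{lemma:competitive-ratio-subgroup-non}, summing across the $\kappa$ subgroups of a group as in Lemma \ref{lemma:online-competitive-group}, and then summing across the $\ell$ groups together with the read-only group $B$), no new analysis should be required here.

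First I would invoke the theorem to obtain $CR_{Non-Clairvoyant}(\cT) = O(\ell \cdot \sqrt{s/\beta} \cdot \log n)$ with probability at least $1 - 1/n$. Next I would plug in the two hypotheses: the balanced-workload assumption $\beta = \Theta(1)$ yields $1/\beta = \Theta(1)$ and hence $\sqrt{s/\beta} = \Theta(\sqrt{s})$, while $\ell = O(1)$ allows the leading $\ell$ factor to be absorbed into the hidden constant. Combining, the competitive ratio collapses to $O(\sqrt{s} \cdot \log n)$, and the failure-probability bound $1/n$ is inherited from the theorem without modification.

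The only point worth a brief sentence of justification is the identification of the hypothesis $\beta = O(1)$ stated in the corollary with the effective lower bound $\beta = \Omega(1)$ actually needed for the substitution to go through. This is safe because $\beta \leq 1$ holds by definition, so $\beta = O(1)$ is automatic, and the balanced-workload definition from Section \ref{section:preliminaries} supplies the missing lower bound $\beta = \Theta(1)$. I do not anticipate any real obstacle; the difficulty has been entirely resolved in the proof of Theorem \ref{theorem:online-competitive}, which handles the delicate union bound over all $n$ transactions to ensure that the $1 - 1/n$ success guarantee is preserved when aggregating across subgroups and groups.
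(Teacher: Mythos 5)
Your proposal is correct and matches the paper exactly: the paper states that this corollary ``follows immediately from Theorem \ref{theorem:online-competitive},'' which is precisely your specialization of $\beta = \Theta(1)$ and $\ell = O(1)$ into the theorem's bound, with the $1 - \frac{1}{n}$ probability inherited unchanged. Your side remark reconciling the corollary's stated hypothesis $\beta = O(1)$ with the lower bound $\beta = \Omega(1)$ actually needed (supplied by the balanced-workload definition) is a fair and correct clarification of a slight imprecision in the paper's wording.
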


\section{Hardness of Balanced Transaction Scheduling}
\label{section:lower bound}
In this section, we show that the performance of
{\sf Clairvoyant} is tight
by reducing the graph coloring problem
to the transaction scheduling problem.

A \textsc{Vertex Coloring} problem instance
asks whether a given graph $G$ is $k$-colorable \cite{Vertex90}.
A valid $k$-coloring
is an assignment of integers $\{1,2,\cdots, k\}$ (the colors)
to the vertices of $G$ so that neighbors receive different integers.
The chromatic number, $\chi(G)$ is the smallest $k$ such that $G$
has a valid $k$-coloring.
We say that an algorithm approximates $\chi(G)$ with approximation ratio
$q(G)$ if it outputs $u(G)$ such that $\chi(G)\leq u(G)$
and $u(G)/\chi(G) \leq q(G)$.
Typically, $q(G)$ is expressed only as a function of $n$,
the number of vertices in $G$.
It is well known that known \textsc{Vertex Coloring} is NP-complete.
It is also shown in \cite {Chromatic96}
that unless \textsf{NP$\subseteq$ZPP},
there does not exist
a polynomial time algorithm to approximate $\chi(G)$ with approximation ratio
$O(n^{1-\epsilon})$ for any constant $\epsilon > 0$,
where $n$ denotes the number of vertices in graph $G$.

A \textsc{Transaction Scheduling} problem instance
asks whether a set of transactions $\cT$ with a set of resources $\cR$
has makespan $k$ time steps.
We give a polynomial time reduction of the
\textsc{Vertex Coloring} problem to the \textsc{Transaction Scheduling} problem.
Consider an input graph $G = (V,E)$ of the \textsc{Vertex Coloring} problem,
where $|V| = n$ and $|E| = s$.
We construct a set of transactions $\cT$ such that for each $v \in V$
there is a respective transaction $T_v \in \cT$;
clearly, $|\cT| = |V| = n$.
We also use a set of resources $\cR$ such that for each edge $e \in V$
there is a respective resource $R_e \in \cR$;
clearly, $|\cR| = |E| = s$.
If $e = (u,v) \in E$,
then both the respective transactions $T_u$ and $T_v$
use the resource $R_e$ for write.
Since all transaction operations are writes,
we have that $\beta = 1$.
We take all the transactions to have the same execution
length equal to one time step,
that is, $\tau_{\max} = \tau_{\min} = 1$, and $\ell = 1$.

Let $G'$ be the conflict graph for the transactions $\cT$.
Note that $G'$ is isomorphic to $G$.
Node colors in $G$ correspond to time steps
in which transactions in $G'$ are issued.
Suppose that $G$ has a valid $k$-coloring.
If a node $v \in G$ has a color $x$,
then the respective transaction $T_v \in G'$
can be issued and commit at time step $x$,
since no conflicting transaction (neighbor in $G'$)
has the same time assignment (color) as $T_v$.
Thus, a valid $k$-coloring in $G$ implies a schedule with makespan $k$
for the transactions in $\cT$.
Symmetrically, a schedule with makespan $k$ for $\cT$
implies a valid $k$-coloring in $G$.

It is easy to see that the problem \textsc{Transaction Scheduling} is in $NP$.
From the reduction of the \textsc{Vertex Coloring} problem,
we also obtain that \textsc{Transaction Scheduling} is $NP$-complete.

From the above reduction,
we have that an approximation ratio $q(G)$ of the \textsc{Vertex Coloring} problem
implies the existence of a scheduling algorithm $\cA$
with competitive ratio $CR_{\cA}(\cT) = q(G)$ of the respective
\textsc{Transaction Scheduling} problem instance,
and vice-versa.
Since $s = |\cR| = |E| \leq n^2$,
an $(\sqrt{s})^{1-\epsilon}$ competitive ratio of $\cA$
implies at most an $n^{1-\epsilon}$ approximation ratio of \textsc{Vertex Coloring}.
Since, we know that unless \textsf{NP$\subseteq$ZPP},
there does not exist
a polynomial time algorithm to approximate $\chi(G)$ with approximation ratio
$O(n^{1-\epsilon})$ for any constant $\epsilon > 0$,
we obtain a symmetric result for the \textsc{Transaction Scheduling} problem:

\begin{theorem}[Approximation Hardness of \textsc{Transaction Scheduling}]
\label{theorem:approximation-hardness}
Unless {\sf NP$\subseteq$ZPP},
we cannot obtain a polynomial time transaction scheduling algorithm
such that for every input instance with $\beta = 1$ and $\ell = 1$
of the \textsc{Transaction Scheduling}
problem the algorithm achieves competitive ratio smaller than $O((\sqrt{s})^{1-\epsilon})$
for any constant $\epsilon > 0$.
\end{theorem}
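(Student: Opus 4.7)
The plan is to prove this hardness result by a polynomial-time reduction from the \textsc{Vertex Coloring} problem, leveraging the Feige--Kilian inapproximability result that $\chi(G)$ cannot be approximated within a factor $O(n^{1-\epsilon})$ in polynomial time unless \textsf{NP}$\subseteq$\textsf{ZPP}. The conceptual core is that a valid coloring of a graph corresponds naturally to a valid conflict-free schedule for a properly constructed transaction workload, with colors playing the role of time slots.

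Given an instance graph $G = (V, E)$ with $|V| = n$ and $|E| = s$, I would construct a \textsc{Transaction Scheduling} instance as follows: create one transaction $T_v$ for each vertex $v \in V$ and one shared resource $R_e$ for each edge $e \in E$. For each edge $e = (u,v)$, both $T_u$ and $T_v$ will write to $R_e$. Since every operation is a write, the balancing ratio is $\beta = 1$; since all transactions are set to unit execution time, $\tau_{\max} = \tau_{\min} = 1$ and $\ell = 1$, matching the parameter regime in the theorem statement. By construction, the conflict graph of $\cT$ is isomorphic to $G$: two transactions conflict iff the corresponding vertices are adjacent.

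Next I would verify the bijection between colorings and schedules. A valid $k$-coloring of $G$ partitions $V$ into independent sets $V_1, \ldots, V_k$; issuing all transactions in $V_i$ at time step $i$ yields a conflict-free schedule of makespan $k$. Conversely, any schedule of makespan $k$ groups transactions committing at the same step into independent sets of the conflict graph, yielding a $k$-coloring of $G$. Thus $makespan_{opt}(\cT) = \chi(G)$, and any polynomial-time scheduling algorithm with competitive ratio $\rho$ translates to a polynomial-time coloring algorithm with approximation ratio $\rho$.

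Finally, I would combine the reduction with the bound $s = |E| \leq \binom{n}{2} \leq n^2$, so that $\sqrt{s} \leq n$. A hypothetical polynomial-time scheduling algorithm with competitive ratio $O((\sqrt{s})^{1-\epsilon})$ would yield a polynomial-time approximation of $\chi(G)$ within $O(n^{1-\epsilon})$, contradicting the Feige--Kilian result unless \textsf{NP}$\subseteq$\textsf{ZPP}. There is no genuinely hard step here once the reduction is set up: the only items needing care are checking that the construction is polynomial, confirming that the forced parameters $\beta = 1$ and $\ell = 1$ are preserved, and tracking the $s \leq n^2$ inequality so that the exponent $1-\epsilon$ transfers cleanly between the coloring and scheduling statements.
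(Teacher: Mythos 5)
Your proposal is correct and follows essentially the same route as the paper: the identical vertex-to-transaction, edge-to-resource reduction with all-write unit-length transactions (forcing $\beta=1$, $\ell=1$), the isomorphism between the conflict graph and $G$, the color-to-time-step correspondence, and the $s \leq n^2$ bound to transfer the Feige--Kilian $O(n^{1-\epsilon})$ coloring inapproximability into the $O((\sqrt{s})^{1-\epsilon})$ scheduling bound. Your explicit observation that $makespan_{opt}(\cT) = \chi(G)$ via the independent-set partition is just a slightly more detailed statement of the paper's bijection argument.
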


Theorem \ref{theorem:approximation-hardness} implies that the $O(\sqrt s)$
bound of Algorithm {\sf Clairvoyant},
given in Corollary \ref{corollary:online-competitive-clairvoyant}
for $\beta = O(1)$ and $\ell = O(1)$, is tight.

\section{Conclusions}
\label{section:conclusion}
We have studied the competitive ratios achieved by transactional contention managers
on balanced workloads. The randomized algorithms presented in this paper allow to
achieve best competitive bound on balanced workloads.
We also establish hardness results on the competitive ratios
in our balanced workload model by reducing the well known NP-complete
vertex coloring problem
to the transactional scheduling problem.

There are several interesting directions for future work.
As advocated in~\cite{Her03},
our algorithms are conservative $-$ abort at least one transaction involved in a conflict $-$
as it reduces the cost to track conflicts and dependencies.
It is interesting to look whether the other schedulers
which are less conservative can give improved competitive ratios by reducing the overall makespan.
First, our study can be complemented by studying other performance measures,
such as the average response time of transactions under balanced workloads.
Second, while we have theoretically analyzed the behavior of balanced workloads,
it is interesting to see how our contention managers compare experimentally with prior
transactional contention managers, e.g., \cite{carstm,Yoo08,Gue05a,stealonabort09}.

\bibliographystyle{splncs03}
\bibliography{balanced-OPODIS2010}

\newpage
\appendix

\section{Proofs of Section \ref{section:online}}

\noindent
{\bf Proof of Lemma \ref{observation:3}:}
\begin{proof}
Since there is only one subgroup,
a transaction $T \in A_i^j$
conflicts with at most $d_T \leq \lambda_{\max}^j \cdot \gamma_{\max}^j$
other transactions in the same subgroup.
From Lemma \ref{lemma:roger},
it will take at most
$x = 16 \cdot e \cdot (\lambda_{\max}^j \cdot \gamma_{\max}^j + 1) \cdot \tau_{\max}^j \cdot \ln n$
time steps until $T$ commits,
with probability at least $1-\frac{1}{n^2}.$
Considering now all the transactions in $A_i^j$,
and taking the union bound of individual event probabilities,
we have that all the transactions in $A_i^j$ commit
within time $x$ with probability
at least $1-\frac{|A_i^j|}{n^2}.$
Therefore,
with probability at least $1-\frac{|A_i^j|}{n^2}$,
the makespan is bounded by:
$$makespan_{Non-Clairvoyant}(A_i^j)
\leq 16 \cdot e \cdot (\lambda_{\max}^j \cdot \gamma_{\max}^j + 1) \cdot \tau_{\max}^j \cdot \ln n.$$

Similar to Lemma \ref{observation:1}, there is a resource that is accessed by at least $\gamma_{\max}^j$
transactions of $A_i^j$ for write
so that all these transactions have to be serialized because of the conflicts.
Therefore, the optimal makespan is bounded by:
$$makespan_{opt}(A_i^j) \geq \gamma_{\max}^j \cdot \tau_{\min}^j.$$

By combining the upper and lower bounds,
we obtain a bound on the competitive ratio:
\begin{eqnarray*}
CR_{Non-Clairvoyant}(A_i^j)
& = & \frac{makespan_{Non-Clairvoyant}(A_i^j)}{makespan_{opt}(A_i^j)}\\
& \leq & \frac{16 \cdot e \cdot (\lambda_{\max}^j \cdot \gamma_{\max}^j  + 1 ) \cdot \tau_{\max}^j \cdot \ln n}
{\gamma_{\max}^j \cdot \tau_{\min}^j}\\
& \leq & 32 \cdot e \cdot (\lambda_{\max}^j + 1) \cdot \ln n \\
& \leq & 64 \cdot e \cdot \lambda_{\max}^j \cdot \ln n,
\end{eqnarray*}
with probability at least $1-\frac{|A_i^j|}{n^2}.$
\end{proof}

\noindent
{\bf Proof of Lemma \ref{observation:4}:}
\begin{proof}
Since for any transaction $T \in A_i^j$,
$d_T \leq |N_T| \leq |A_i^j| - 1$,
similar to the proof of Lemma \ref{observation:3},
with probability at least $1-\frac{|A_i^j|}{n^2},$
the makespan is bounded by:
$$makespan_{Non-Clairvoyant}(A_i^j) \leq 16 \cdot e \cdot |A_i^j| \cdot \tau_{\max}^j \cdot \ln n.$$

Similar to Lemma \ref{observation:2},
the optimal makespan is bounded by:
$$makespan_{opt}(A_i^j)
\geq \frac{|A_i^j| \cdot \beta \cdot \lambda_{\max}^j} {2s}  \cdot \tau_{\min}^j.$$

When we combine the above bounds of the makespan we obtain a bound
on the competitive ratio:
\begin{eqnarray*}
CR_{Non-Clairvoyant}(A_i^j)
& = & \frac{makespan_{Non-Clairvoyant}(A_i^j)} {makespan_{opt}(A_i^j)}\\
& \leq & \frac{16\cdot e\cdot |A_i^j| \cdot \tau_{\max}^j \cdot \ln n} {\frac{|A_i^j| \cdot \beta \cdot \lambda_{\max}^j}{ 2 s} \cdot \tau_{\min}^j}
 \leq  64 \cdot e \cdot \frac{s / \beta }{\lambda_{\max}^j} \cdot \ln n,
\end{eqnarray*}
with probability at least $1-\frac{|A_i^j|}{n^2}.$
\end{proof}

\noindent
{\bf Proof of Lemma \ref{lemma:online-competitive-group}:}
\begin{proof}
Since $\lambda_{\max}^j=(2^{j+1}-1)$,
Corollary \ref{lemma:competitive-ratio-subgroup-non}
gives for each subgroup $A_{i}^j$ competitive ratio
\begin{eqnarray*}
CR_{Non-Clairvoyant} (A_{i}^j)
& \leq & 64 \cdot e \cdot \min \left \{2^{j+1}-1, \frac{s/\beta}{2^{j+1}-1}  \right \}\cdot \ln n\\
& \leq & 128 \cdot e \cdot \min \left \{2^{j}, \frac{s/\beta}{2^{j}}  \right \}\cdot \ln n,
\end{eqnarray*}
with probability at least
$1-\frac{|A_i^j|}{n^2}.$
Following the proof steps as in Lemma \ref{lemma:offline-competitive-group},
we obtain:
\begin{eqnarray*}
CR_{Non-Clairvoyant} (A_{i})
& \leq & 512 \cdot e \cdot \sqrt{\frac{s}{\beta}} \cdot \ln n.
\end{eqnarray*}
This bound holds with
with probability at least
$1-\frac{\sum_{j=0}^{\kappa-1}|A_i^j|}{n^2}=1-\frac{|A_i|}{n^2}$,
since $\sum_{j=0}^{\kappa-1} |A_i^j| = |A_i|$.
\end{proof}

\noindent
{\bf Proof of Theorem \ref{theorem:online-competitive}:}
\begin{proof}
As there are $\ell$ groups of transactions $A_i$, and one group $B$,
in the worst case, Algorithm {\sf Non-Clairvoyant} will
commit the transactions in each group according to their
order starting from the lowest order group
and ending at the highest order group.
Clearly, the algorithm will execute the read-only transactions in group $B$
in optimal time.
Therefore, using Lemma \ref{lemma:online-competitive-group}
we obtain:
\begin{eqnarray*}
CR_{Non-Clairvoyant}(\cT)
& \leq & \sum_{i=1}^{\ell} CR_{Non-Clairvoyant}(A_{i}) + CR_{Non-Clairvoyant}(B)\\
& \leq & \sum_{i=0}^{\ell-1} 512 \cdot e \cdot \sqrt{\frac{s}{\beta}}\cdot \ln n + 1\\
& = & 512 \cdot e \cdot \ell \cdot \sqrt{\frac{s}{\beta}} \cdot  \ln n + 1,
\end{eqnarray*}
with probability at least  $1-\frac{\sum_{i=0}^{\ell-1} |A_i|}{n^2}=1-n^{-1}$,
since $\sum_{i=0}^{\ell-1} |A_i| = |\cT| = n$.
\end{proof}

\end{document}